\documentclass[
    11pt, 
]{article}
\usepackage[scale=.8]{geometry}

\date{}

\newcommand{\papertitle}{DPO: Dynamic-Programming Optimization on Hybrid Constraints}

\title{
    \papertitle%
    \footnote{Work supported in part by NSF grants IIS-1527668, CCF-1704883, IIS-1830549, and CNS-2016656; DoD MURI grant N00014-20-1-2787; and an award from the Maryland Procurement Office.}
}

\newcommand{\authors}{Vu H. N. Phan and Moshe Y. Vardi}

\author{
    \authors \\
    \texttt{\{vhp1,vardi\}@rice.edu} \\
    Rice University
}

\usepackage[
    pdftitle=\papertitle,
    pdfauthor=\authors,
    hidelinks,
]{hyperref}

\usepackage[commandnameprefix=ifneeded]{changes}
\definechangesauthor[name=Moshe, color=red]{myv}
\definechangesauthor[name=Vu, color=blue]{vhp}


\usepackage[none]{hyphenat}

\bibliographystyle{apalike}

\usepackage{breakcites}

\usepackage{amsthm}

\newtheorem{definition}{Definition}
\newtheorem{lemma}{Lemma}

\usepackage{thm-restate}

\usepackage{bookmark}
\bookmarksetup{
    numbered,
    open,
}

\usepackage{float}

\usepackage{multirow}

\usepackage{tikz-qtree}

\usepackage[
    ruled,
    linesnumbered,
]{algorithm2e}
\SetCommentSty{}
\SetKw{Assert}{assert}
\SetKwFor{For}{for}{}{}
\SetKwFor{While}{while}{}{}
\SetKwIF{If}{ElseIf}{Else}{if}{}{else if}{else}{}

\usepackage{amsmath}
\usepackage[
    capitalize,
    noabbrev,
]{cleveref}

\newcommand{\ms}{MaxSAT}

\newcommand{\bayescnfs}{1049}

\newcommand{\chaincnfs}{441}
\newcommand{\fastchaincnfs}{371}

\newcommand{\tool}[1]{\texttt{#1}}
\newcommand{\cash}{\tool{CASHWMaxSAT}}
\newcommand{\cms}{\tool{CryptoMiniSat}}

\newcommand{\dpmc}{\tool{DPMC}}

\newcommand{\dpo}{\tool{DPO}}

\newcommand{\executor}{\tool{DMC}}
\newcommand{\flowcutter}{\tool{FlowCutter}}
\newcommand{\gauss}{\tool{GaussMaxHS}}

\newcommand{\maxhs}{\tool{MaxHS}}
\newcommand{\mpesat}{\tool{MPE-SAT}}
\newcommand{\planner}{\tool{LG}}

\newcommand{\uwr}{\tool{UWrMaxSat}}
\newcommand{\vbs}[1]{\tool{VBS{#1}}}

\newcommand{\angles}[1]{\langle #1 \rangle}
\newcommand{\braces}[1]{\left\{ #1 \right\}}
\newcommand{\brackets}[1]{\left[ #1 \right]}
\newcommand{\pars}[1]{\left( #1 \right)}
\newcommand{\pipes}[1]{\left| #1 \right|}

\newcommand{\tup}{\angles} 

\newcommand{\set}{\braces}

\newcommand{\pb}[1]{\brackets{#1}} 

\newcommand{\of}{\pars}

\newcommand{\size}[1]{\pipes{#1}}

\newcommand{\bigo}[1]{\operatorname{O}\of{#1}}

\newcommand{\vars}[1]{\operatorname{vars}\of{#1}}

\newcommand{\dom}[1]{\operatorname{dom}\of{#1}} 

\newcommand{\width}[1]{\operatorname{width}\of{#1}}

\newcommand{\func}[1]{\mathtt{#1}}

\newcommand{\valuator}{\func{Valuator}}

\newcommand{\push}[2]{\func{push}\of{#1, #2}}
\newcommand{\pop}[1]{\func{pop}\of{#1}}

\newcommand{\minsert}[2]{\func{insert}\of{#1, #2}} 
\renewcommand{\minsert}[2]{#1 \gets #1 \cup \set{#2}}

\newcommand{\mremove}[2]{\func{remove}\of{#1, #2}} 
\renewcommand{\mremove}[2]{#1 \gets #1 \setminus \set{#2}}

\newcommand{\ta}{\tau} 

\newcommand{\va}[2]{\tup{#1, #2}} 

\newcommand{\extend}[3]{#1 \cup \set{\va{#2}{#3}}}

\newcommand{\wf}{W} 

\usepackage{stmaryrd}
\newcommand{\val}[1]{{\llbracket #1 \rrbracket}_\wf} 

\usepackage{amssymb}
\newcommand{\restrict}[2]{#1\restriction_{#2}}

\newcommand{\B}{\mathbb{B}}
\newcommand{\R}{\mathbb{R}}

\newcommand{\ps}[1]{\B^{#1}} 

\renewcommand{\emptyset}{\varnothing}

\renewcommand{\phi}{\varphi}

\newcommand{\V}[1]{\mathcal{V}\of{#1}} 

\newcommand{\C}[1]{\mathcal{C}\of{T, r, #1}} 

\newcommand{\Lv}{\mathcal{L}\of{T, r}} 

\newcommand{\T}{\mathcal{T}} 

\newcommand{\gammamap}{\overset{\gamma}{\mapsto}}
\newcommand{\pimap}{\overset{\pi}{\mapsto}}

\newcommand{\stack}{\sigma}

\DeclareMathOperator*{\dsgn}{dsgn}

\DeclareMathOperator*{\argmax}{argmax}


\usepackage{scalefnt} 
\newcommand{\scaleexists}[1]{\vcenter{\hbox{\scalefont{#1}$\exists$}}}
\DeclareMathOperator*{\bigexists}{\vphantom\sum\mathchoice{\scaleexists{2}}{\scaleexists{1.4}}{\scaleexists{1}}{\scaleexists{0.75}}}

\newcommand{\eg}{e.g.}
\newcommand{\ie}{i.e.}
\newcommand{\wrt}{w.r.t.}

\newcommand{\bmpe}{Boolean MPE}
\newcommand{\xcnf}{XOR-CNF}


\begin{document}


\maketitle


\begin{abstract}
    In Bayesian inference, the \emph{most probable explanation (MPE)} problem requests a variable instantiation with the highest probability given some evidence.
    Since a Bayesian network can be encoded as a literal-weighted CNF formula $\phi$, we study \emph{\bmpe}, a more general problem that requests a model $\tau$ of $\phi$ with the highest weight, where the weight of $\tau$ is the product of weights of literals satisfied by $\tau$.
    It is known that \bmpe{} can be solved via reduction to \emph{(weighted partial) \ms}.
    Recent work proposed \dpmc, a dynamic-programming model counter that leverages graph-decomposition techniques to construct \emph{project-join trees}.
    A project-join tree is an execution plan that specifies how to conjoin clauses and project out variables.
    We build on \dpmc{} and introduce \dpo, a dynamic-programming optimizer that exactly solves \bmpe.
    By using \emph{algebraic decision diagrams (ADDs)} to represent pseudo-Boolean (PB) functions, \dpo{} is able to handle disjunctive clauses as well as XOR clauses.
    (Cardinality constraints and PB constraints may also be compactly represented by ADDs, so one can further extend \dpo's support for hybrid inputs.)
    To test the competitiveness of \dpo, we generate random \xcnf{} formulas.
    On these hybrid benchmarks, \dpo{} significantly outperforms \maxhs, \uwr, and \gauss, which are state-of-the-art exact solvers for \ms.
\end{abstract}



\section{Introduction}

Bayesian inference \cite{pearl1985bayesian} has numerous applications, including medical diagnosis \cite{shwe1991probabilistic} and industrial fault analysis \cite{cai2017bayesian}.
Given a Bayesian network, the \emph{most probable explanation (MPE)} problem requests a variable instantiation with the highest probability.
MPE has the form $\argmax_X f\of{X}$, where $f$ is a real-valued function over a set $X$ of variables.
Also having that form is the \emph{satisfiability (SAT)} problem, which requests a \emph{satisfying truth assignment (model)} of a Boolean formula \cite{cook1971complexity}.
SAT can be viewed as a counterpart of MPE where $f$ is a Boolean function \cite{littman2001stochastic}.

A generalization of SAT is the \emph{\bmpe} problem, which receives a literal-weighted Boolean formula and requests a \emph{maximizing truth assignment (maximizer)}, \ie, a truth assignment $\ta$ with the highest weight, where the weight of $\ta$ is the product of weights of literals satisfied by $\ta$ \cite{sang2007dynamic}.
\bmpe{} is more general than (Bayesian) MPE because Bayesian networks can be encoded as literal-weighted Boolean formulas \cite{sang2005performing}.

A Boolean formula is usually given in \emph{conjunctive normal form (CNF)}, \ie, as a set of disjunctive clauses.
An optimization version of SAT, the \emph{maximum satisfiability (\ms)} problem, requests a truth assignment that maximizes the number of satisfied clauses of an unsatisfiable CNF formula \cite{krentel1988complexity}.
\ms{} may be used to solve MPE \cite{park2002using} and \bmpe{} \cite{sang2007dynamic} via appropriate reductions.

Most SAT and \ms{} solvers only accept Boolean formulas in CNF, but disjunctive clauses do not always provide a natural formulation of an application.
For example, certain properties in cryptography are more conveniently described with XOR \cite{bogdanov2011biclique}.
In addition, \emph{cardinality constraints} help formulate problems in graph coloring \cite{costa2009graph}, and \emph{hybrid constraints} have also been used for maximum-likelihood decoding \cite{feldman2005using}.

All Boolean constraints can be converted into CNF, and efficient CNF encodings often exist.
For example, XOR can be encoded using the Tseitin transformation, which adds a linear number of disjunctive clauses and auxiliary variables \cite{tseitin1983complexity}.
There are, however, many possible CNF encodings, and solver performance crucially depends on the encoding used, so finding good encodings is solver-dependent and nontrivial \cite{prestwich2009cnf}.
Thus, CNF conversion is not always a good approach when dealing with hybrid constraints.

Instead of using CNF encodings, one may wish to handle non-disjunctive constraints natively.
This direct approach avoids conversion overhead and preserves application-specific structure that would be hard to recover after transformation.
Hybrid solvers have been developed for SAT \cite{soos2009extending,yang2021engineering} and related problems, such as \emph{weighted model counting (WMC)} \cite{soos2019bird,soos2020tinted}.
(WMC requests the sum of weights of satisfying truth assignments of a Boolean formula \cite{valiant1979complexity}.)
Yet developing performant hybrid solvers for specific types of non-disjunctive constraints is labor-intensive, which motivates the need for hybrid solvers that can natively handle a variety of constraint types, \eg, \cite{kyrillidis2020fouriersat}.

A recent WMC framework is \dpmc{} \cite{dudek2020dpmc}, which employs \emph{dynamic programming}, a strategy that solves a large instance by decomposing it into parts then combining partial solutions into a final answer.
Dynamic programming has been applied to WMC \cite{fichte2020exploiting,dudek2021procount}, SAT \cite{pan2005symbolic}, \ms{} \cite{sang2007dynamic,saether2015solving}, and \emph{quantified Boolean formula (QBF)} evaluation \cite{charwat2016bdd}.

To guide dynamic programming for WMC, \dpmc{} uses \emph{project-join trees} as execution plans.
A project-join tree for a CNF formula specifies how to conjoin clauses and project out variables in order to obtain the weighted model count.
\dpmc{} operates in two phases.
First, the \emph{planning phase} builds a project-join tree with graph-decomposition techniques \cite{robertson1991graph,strasser2017computing}.
Second, the \emph{execution phase} uses the built tree to compute a final answer, where intermediate results are represented by \emph{algebraic decision diagrams (ADDs)} \cite{bahar1997algebraic,somenzi2015cudd}.
ADDs are a data structure that compactly represents pseudo-Boolean functions, such as conjunction, disjunction, XOR, cardinality constraints, and others.

We build on \dpmc{} to develop \dpo, a dynamic-programming optimization framework that exactly solves \bmpe{} on \xcnf{} formulas.
To find maximizers, we also adapt an iterative technique \cite{kyrillidis2022dpms} that was recently proposed for \ms.
To demonstrate the advantage of our approach, we generate random \xcnf{} benchmarks on which \dpo{} significantly outperforms \maxhs{} \cite{davies2011solving}, \uwr{} \cite{piotrow2020uwrmaxsat}, and \gauss{} \cite{soos2021gaussian}, which are state-of-the-art exact solvers for \ms.


\section{Related Work}
\label{secRelatedWork}

\mpesat{} \cite{sang2007dynamic} is an exact \bmpe{} solver that is based on DPLL and techniques in WMC \cite{sang2004combining}.
\mpesat{} dynamically detects \emph{connected components}, which are subformulas with disjoint sets of variables.
These components can be solved independently before results are combined into a final answer.
Processed components and their computed values are cached to avoid repeated work.
A lower bound is dynamically updated to prune the search tree.
\mpesat{} also utilizes CDCL to eliminate the infeasible search space.

Instead of solving \bmpe{} directly, one can employ a simple reduction from a \bmpe{} instance $\phi$ to a \emph{(weighted partial) \ms} instance $\psi$ \cite{sang2007dynamic}.
First, each clause of the original CNF formula $\phi$ induces a hard clause of the new CNF formula $\psi$.
Then, each literal $l$ of $\phi$ with weight $w$ induces a unit clause $l$ of $\psi$ with soft weight $\log\of{w}$.
If $\phi$ has $n$ variables and $m$ clauses, then $\psi$ has $n$ variables, $m$ hard clauses, and $2n$ soft clauses.
Notice that a truth assignment $\ta$ is a maximizer of $\phi$ if and only if $\ta$ is an optimal truth assignment of $\psi$.

There are several \ms{} algorithms that cover a variety of techniques \cite{bacchus2020maxsat}.
For example, \maxhs{} combines SAT and hitting-set computation \cite{davies2011solving}.
In more detail, \maxhs{} invokes a SAT solver to find \emph{unsatisfiable cores}.
To compute a minimal-cost hitting set, \maxhs{} employs \emph{integer linear programming (ILP)} as well as a branch-and-bound approach.
A more recent \ms{} solver is \uwr{} \cite{piotrow2020uwrmaxsat}, which also solves PB problems.
\uwr{} uses a core-guided approach and translates cardinality constraints into CNF.
For weighted instances, \uwr{} applies other techniques as well, such as preprocessing to detect unit cores.
An extension of \uwr{} is \cash{} \cite{lei2021cashwmaxsat}, which transforms small \ms{} instances into ILP.
Also, \cash{} invokes a SAT solver multiple times to find smaller cores.
After choosing a core, \cash{} delays encoding the new constraint if the core is larger than a predefined threshold.

Most \ms{} solvers only work on CNF, but \gauss{} \cite{soos2021gaussian} also supports XOR clauses.
Inspired by the SAT solver \cms{} \cite{soos2009extending}, \gauss{} uses an architecture that combines Gaussian elimination with the \ms{} solver \maxhs.
Gaussian elimination is performed on dense bit-packed matrices to exploit SIMD.
A related \xcnf{} architecture has been proposed for approximate WMC \cite{soos2019bird}.


\section{Preliminaries}

\subsection{Graphs}

In a \emph{graph} $G$, let $\V{G}$ denote the set of vertices.
A \emph{tree} is an undirected graph that is connected and acyclic.
We refer to a vertex of a tree as a \emph{node}.
A \emph{rooted tree} is a tree $T$ together with a distinguished node $r \in \V{T}$ called the \emph{root}.
In a rooted tree $\tup{T, r}$, each node $v \in \V{T}$ has a (possibly empty) set of \emph{children}, denoted by $\C{v}$, which contains every node $v'$ adjacent to $v$ such that the path from $v'$ to $r$ passes through $v$.
A \emph{leaf} of a rooted tree $\tup{T, r}$ is a non-root node of degree one.
Let $\Lv$ denote the set of leaves of $\tup{T, r}$.
An \emph{internal node} is a member of $\V{T} \setminus \Lv$, including the root.

\subsection{Pseudo-Boolean Functions}

Let $B^A$ denote the set of all functions $f$ with domain $\dom{f} := A$ and codomain $B$.
The \emph{restriction} of $f$ to a set $S$ is a function defined by $\restrict{f}{S} := \set{\tup{a, b} \in f \mid a \in S}$.

From now on, every variable is binary unless noted otherwise.
A \emph{truth assignment} for a set $X$ of variables is a function $\ta : X \to \B$.

A \emph{pseudo-Boolean (PB) function} over a set $X$ of variables is a function $f : \ps{X} \to \R$.
Define $\vars{f} := X$.
We say that $f$ is \emph{constant} if $X = \emptyset$.
A \emph{Boolean function} is a special PB function $f : \ps{X} \to \B$.

\begin{definition}[Join]
\label{defJoin}
    Let $f : \ps{X} \to \R$ and $g : \ps{Y} \to \R$ be PB functions.
    The \emph{(multiplicative) join} of $f$ and $g$ is a PB function, denoted by $f \cdot g : \ps{X \cup Y} \to \R$, defined for each $\ta \in \ps{X \cup Y}$ by $(f \cdot g)(\ta) := f(\restrict{\ta}{X}) \cdot g(\restrict{\ta}{Y})$.
\end{definition}

Join is commutative and associative: we have $f \cdot g = g \cdot f$ as well as $(f \cdot g) \cdot h = f \cdot (g \cdot h)$ for all PB functions $f$, $g$, and $h$.
Then define $\prod_{i = 1}^n f_i := f_1 \cdot f_2 \cdot \ldots \cdot f_n$.

\begin{definition}[Existential Projection]
\label{defExistProj}
    Let $f : \ps{X} \to \R$ be a PB function and $x$ be a variable.
    The \emph{existential projection} of $f$ \wrt{} $x$ is a PB function, denoted by $\exists_x f : \ps{X \setminus \set{x}} \to \R$, defined for each $\ta \in \ps{X \setminus \set{x}}$ by $(\exists_x f)(\ta) := \max\of{f(\extend{\ta}{x}{0}), f(\extend{\ta}{x}{1})}$.
\end{definition}

Existential projection is commutative: $\exists_x (\exists_y f) = \exists_y (\exists_x f)$ for all variables $x$ and $y$.
Then define $\exists_S f := \exists_x \exists_y \ldots f$, where $S = \set{x, y, \ldots}$ is a set of variables.
By convention, $\exists_\emptyset f := f$.

\begin{definition}[Maximum]
\label{defMaximum}
    Let $f : \ps{X} \to \R$ be a PB function.
    The \emph{maximum} of $f$ is the real number $(\exists_X f)(\emptyset)$.
\end{definition}

\begin{definition}[Maximizer]
\label{defMaximizer}
    Let $f : \ps{X} \to \R$ be a PB function.
    A \emph{maximizer} of $f$ is a truth assignment $\ta \in \ps{X}$ such that $f(\ta) = (\exists_X f)(\emptyset)$.
\end{definition}

Another type of projection, also widely used in propositional logic, is defined as follows.

\begin{definition}[Additive Projection]
\label{defAddProj}
    Let $f : \ps{X} \to \R$ be a PB function and $x$ be a variable.
    The \emph{additive projection} of $f$ \wrt{} $x$ is a PB function, denoted by $\Sigma_x f : \ps{X \setminus \set{x}} \to \R$, defined for each $\ta \in \ps{X \setminus \set{x}}$ by $(\Sigma_x f)(\ta) := f(\extend{\ta}{x}{0}) + f(\extend{\ta}{x}{1})$.
\end{definition}

Additive projection is commutative: $\Sigma_x (\Sigma_y f) = \Sigma_y (\Sigma_x f)$ for all variables $x$ and $y$.
Then define $\Sigma_S f := \Sigma_x \Sigma_y \ldots f$, where $S = \set{x, y, \ldots}$ is a set of variables.
By convention, $\Sigma_\emptyset f := f$.

Propositional logic can be generalized for probabilistic domains using real-valued weights.

\begin{definition}[Literal-Weight Function]
    A \emph{literal-weight function} over a set $X$ of variables is a PB function, denoted by $\wf : \ps{X} \to \R$, defined by $\wf := \prod_{x \in X} \wf_x$ for some PB functions $\wf_x : \ps{\set{x}} \to \R$.
\end{definition}

\subsection{\bmpe}

Given a Boolean formula $\phi$, define $\vars{\phi}$ to be the set of all variables that appear in $\phi$.
Then $\phi$ represents a Boolean function, denoted by $\pb{\phi} : \ps{\vars{\phi}} \to \B$, defined according to standard Boolean semantics.

\begin{definition}[\bmpe]
\label{defBmpe}
    Let $\phi$ be a Boolean formula and $\wf$ be a literal-weight function over $\vars{\phi}$.
    The \emph{\bmpe} problem on $\tup{\phi, \wf}$ requests the maximum and a maximizer of $\pb{\phi} \cdot \wf$.
\end{definition}

We also formally define the following related problem.

\begin{definition}[Weighted Model Counting]
    Let $\phi$ be a Boolean formula and $\wf$ be a literal-weight function over $X = \vars{\phi}$.
    The \emph{weighted model counting (WMC)} problem on $\tup{\phi, \wf}$ requests the real number $\pars{\Sigma_X \pars{\pb{\phi} \cdot \wf}}(\emptyset)$.
\end{definition}


\section{Solving \bmpe}

\subsection{Monolithic Approach}

A Boolean formula is usually given in \emph{conjunctive normal form (CNF)}, \ie, as a set of \emph{clauses}.
We work with a more general format, \emph{\xcnf}, in which a clause is an XOR or a disjunction of literals.

Given an \xcnf{} formula $\phi$, we have the factorization $\pb{\phi} = \prod_{c \in \phi} \pb{c}$.
In this section, we present an inefficient algorithm to solve \bmpe{} that treats $\phi$ as a monolithic structure and ignores the \xcnf{} factored representation.
The next section describes a more efficient algorithm.

To find maximizers for \bmpe, we leverage the following idea, which originated from the \emph{basic algorithm} for PB programming \cite{crama1990basic} and was adapted for \ms{} \cite{kyrillidis2022dpms}.

\begin{definition}[Derivative Sign]
\label{defDsgn}
    Let $f : \ps{X} \to \R$ be a PB function and $x$ be a variable.
    The \emph{derivative sign} of $f$ \wrt{} $x$ is a function, denoted by $\dsgn_x f : \ps{X \setminus \set{x}} \to \B^{\set{x}}$, defined for each $\ta \in \ps{X \setminus \set{x}}$ by $\pars{\dsgn_x f}(\ta) := \set{\va{x}{1}}$ if $f(\extend{\ta}{x}{1}) \ge f(\extend{\ta}{x}{0})$, and $\pars{\dsgn_x f}(\ta) := \set{\va{x}{0}}$ otherwise.
\end{definition}

The following result leads to an iterative process to find maximizers of PB functions \cite{kyrillidis2022dpms}.

\begin{restatable}[Iterative Maximization]{proposition}{rePropIterMax}
\label{propIterMax}
    Let $f : \ps{X} \to \R$ be a PB function and $x$ be a variable.
    Assume that a truth assignment $\ta$ is a maximizer of $\exists_x f : \ps{X \setminus \set{x}} \to \R$.
    Then the truth assignment $\ta \cup \pars{\dsgn_x f}(\ta)$ is a maximizer of $f$.
\end{restatable}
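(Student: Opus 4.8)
The plan is to unwind the definitions and reduce the statement to a simple case analysis on the value of $\pars{\dsgn_x f}(\ta)$. Write $\ta' := \ta \cup \pars{\dsgn_x f}(\ta)$; note $\ta' \in \ps{X}$ since $\ta \in \ps{X \setminus \set{x}}$ and $\pars{\dsgn_x f}(\ta) \in \B^{\set{x}}$ assigns $x$. I must show $f(\ta') = (\exists_X f)(\emptyset)$. The natural route is to first establish the inequality $f(\ta') \ge (\exists_X f)(\emptyset)$ (the reverse inequality is immediate since $f(\ta') \le \max_{\eta \in \ps{X}} f(\eta) = (\exists_X f)(\emptyset)$ by definition of the maximum).

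First I would observe that $f(\ta') = \max\of{f(\extend{\ta}{x}{0}), f(\extend{\ta}{x}{1})}$: indeed, if $f(\extend{\ta}{x}{1}) \ge f(\extend{\ta}{x}{0})$ then $\pars{\dsgn_x f}(\ta) = \set{\va{x}{1}}$, so $\ta' = \extend{\ta}{x}{1}$ and this equals the max; otherwise $f(\extend{\ta}{x}{0}) > f(\extend{\ta}{x}{1})$, so $\ta' = \extend{\ta}{x}{0}$, again the max. But by \cref{defExistProj}, $\max\of{f(\extend{\ta}{x}{0}), f(\extend{\ta}{x}{1})} = (\exists_x f)(\ta)$. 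Hence $f(\ta') = (\exists_x f)(\ta)$.

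Next I would use the hypothesis that $\ta$ is a maximizer of $\exists_x f : \ps{X \setminus \set{x}} \to \R$. By \cref{defMaximizer} applied to the PB function $\exists_x f$ over the variable set $X \setminus \set{x}$, this means $(\exists_x f)(\ta) = (\exists_{X \setminus \set{x}} (\exists_x f))(\emptyset)$. Combining with the previous paragraph gives $f(\ta') = (\exists_{X \setminus \set{x}} (\exists_x f))(\emptyset)$. It remains to identify the right-hand side with $(\exists_X f)(\emptyset)$. By commutativity of existential projection and the definition $\exists_S f := \exists_x \exists_y \ldots f$ stated after \cref{defExistProj}, we have $\exists_{X \setminus \set{x}}(\exists_x f) = \exists_X f$ (if $x \notin X$ the claim is vacuous since then $\exists_x f = f$ and $X \setminus \set{x} = X$). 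Therefore $f(\ta') = (\exists_X f)(\emptyset)$, which is exactly the condition in \cref{defMaximizer} for $\ta'$ to be a maximizer of $f$.

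I do not expect a serious obstacle here; the one point requiring slight care is the bookkeeping about variable sets — making sure $\exists_{X \setminus \set{x}}(\exists_x f)$ is literally $\exists_X f$ in the cases $x \in X$ and $x \notin X$, and that all the projections are taken over the correct domains so that evaluating at $\emptyset$ makes sense. This is handled by the stated commutativity of $\exists$ and the convention $\exists_\emptyset f := f$, so the argument goes through cleanly.
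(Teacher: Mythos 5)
Your proposal is correct and follows essentially the same route as the paper's proof: a case analysis on $\pars{\dsgn_x f}(\ta)$ showing $f(\ta \cup \pars{\dsgn_x f}(\ta)) = (\exists_x f)(\ta)$, followed by the maximizer hypothesis and commutativity of existential projection to identify this with $(\exists_X f)(\emptyset)$. The only cosmetic difference is that you merge the two cases into a single observation about the maximum, whereas the paper treats the first case explicitly and notes the second is similar.
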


\cref{monoAlgo} can be used to find maximizers \cite{kyrillidis2022dpms}.

\begin{algorithm}[H]
\caption{Computing the maximum and a maximizer of a PB function}
\label{monoAlgo}
    \KwIn{$f_n$: a PB function over a set $X_n = \set{x_1, x_2, \ldots, x_n}$ of variables}
    \KwOut{$m \in \R$: the maximum of $f_n$}
    \KwOut{$\ta_n \in \ps{X_n}$: a maximizer of $f_n$}

    \DontPrintSemicolon
    \For{$i = n, n - 1, \ldots, 2, 1$}{
        $f_{i - 1} \gets \exists_{x_i} f_i$ \tcp{$f_i$ is a PB function over $X_i = \set{x_1, x_2, \ldots, x_i}$}
    }
    $m \gets f_0(\emptyset)$ \tcp{$f_0 = \exists_{X_n} f_n$ is a constant PB function over $X_0 = \emptyset$} \label{lineMonoMaximum}
    $\ta_0 \gets \emptyset$ \tcp{$\ta_0$ is a maximizer of $f_0$ (and also the only input to $f_0$)} \label{lineMonoTa0}
    \For{$i = 1, 2, \ldots, n - 1, n$}{
        $\ta_i \gets \ta_{i - 1} \cup \pars{\dsgn_{x_i} f_i}(\ta_{i - 1})$ \tcp{$\ta_i$ is a maximizer of $f_i$ since $\ta_{i - 1}$ is a maximizer of $f_{i - 1}$ (by \cref{propIterMax})} \label{lineMonoTaN}
    }
    \Return $\tup{m, \ta_n}$
\end{algorithm}

\cref{monoAlgo} can be used to find maximizers \cite{kyrillidis2022dpms}.

\begin{restatable}[Correctness of \cref{monoAlgo}]{proposition}{rePropMono}
\label{propMonoAlgo}
    Let $\phi$ be a Boolean formula and $\wf$ be a literal-weight function over $\vars{\phi}$.
    \cref{monoAlgo} solves \bmpe{} on $\tup{\phi, \wf}$ given the input $f_n = \pb{\phi} \cdot \wf$.
\end{restatable}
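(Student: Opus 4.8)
The plan is to trace \cref{monoAlgo} line by line, confirming that each assignment realizes a definition from the Preliminaries, and to run a short induction that carries the maximizer property forward through the second loop. First I would deal with the returned value $m$. Unrolling the first loop, the assignments $f_{i-1} \gets \exists_{x_i} f_i$ for $i = n, n-1, \ldots, 1$ give $f_0 = \exists_{x_1} \exists_{x_2} \cdots \exists_{x_n} f_n$; since existential projection is commutative, this equals $\exists_{X_n} f_n$ by the definition of $\exists_S$, and in particular $f_0$ is a constant PB function over $X_0 = \emptyset$. Hence, by \cref{defMaximum}, the quantity $m \gets f_0(\emptyset)$ computed on \cref{lineMonoMaximum} is exactly the maximum of $f_n$.

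Next I would show by induction on $i \in \set{0, 1, \ldots, n}$ that, after the second loop, $\ta_i$ is a maximizer of $f_i$. For the base case $i = 0$: by \cref{lineMonoTa0} we have $\ta_0 = \emptyset$, the unique element of $\ps{X_0} = \ps{\emptyset}$; since $\exists_{X_0} f_0 = \exists_\emptyset f_0 = f_0$ by convention, the identity $f_0(\ta_0) = (\exists_{X_0} f_0)(\emptyset)$ holds trivially, so $\ta_0$ is a maximizer of $f_0$ by \cref{defMaximizer}. For the inductive step, assume $\ta_{i-1}$ is a maximizer of $f_{i-1}$; because $f_{i-1} = \exists_{x_i} f_i$ and $\ta_{i-1} \in \ps{X_{i-1}} = \ps{X_i \setminus \set{x_i}}$ lies in the domain of $\dsgn_{x_i} f_i$, \cref{propIterMax} (applied with $f = f_i$) shows that the value $\ta_i = \ta_{i-1} \cup \pars{\dsgn_{x_i} f_i}(\ta_{i-1})$ assigned on \cref{lineMonoTaN} is a maximizer of $f_i$. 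Instantiating $i = n$, the returned $\ta_n$ is a maximizer of $f_n$.

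Finally I would assemble the conclusion: by hypothesis $f_n = \pb{\phi} \cdot \wf$, so $m$ is the maximum of $\pb{\phi} \cdot \wf$ and $\ta_n$ is a maximizer of $\pb{\phi} \cdot \wf$, which is precisely the pair that \bmpe{} on $\tup{\phi, \wf}$ requests according to \cref{defBmpe}; thus \cref{monoAlgo} solves it. I expect no real analytic difficulty here. The only thing that needs care is the domain bookkeeping — verifying that every $\ta_i$ indeed lies in $\ps{X_i}$, that $\dsgn_{x_i} f_i$ is only ever applied to arguments in $\ps{X_i \setminus \set{x_i}}$, and that the degenerate case $X_0 = \emptyset$ furnishes a legitimate base for the induction — so the ``main obstacle'', such as it is, is phrasing these type-checking observations cleanly rather than proving anything deep.
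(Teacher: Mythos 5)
Your proposal is correct and follows essentially the same route as the paper's proof: unroll the first loop to get $f_0 = \exists_{X_n} f_n$ (so $m$ is the maximum by \cref{defMaximum}), then carry the maximizer property through the second loop via \cref{propIterMax}. You merely make explicit the induction and domain bookkeeping that the paper's shorter argument leaves implicit.
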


\bmpe{} on $\tup{\phi, \wf}$ can be solved by calling \cref{monoAlgo} with $\pb{\phi} \cdot \wf$ as an input.
But the PB function $\pb{\phi} \cdot \wf$ may be too large to fit in main memory, making the computation slow or even impossible.
In the next section, we exploit the \xcnf{} factorization of $\phi$ and propose a more efficient solution.

\subsection{Dynamic Programming}

\bmpe{} on $\tup{\phi, \wf}$ involves the PB function $\pb{\phi} \cdot \wf = \prod_{c \in \phi} \pb{c} \cdot \prod_{x \in X} \wf_x$, where $\phi$ is an \xcnf{} formula and $\wf$ is a literal-weight function over $X = \vars{\phi}$.
Instead of projecting all variables in $X$ after joining all clauses, we can be more efficient and project some variables early as follows \cite{dudek2021procount}.

\begin{restatable}[Early Projection]{proposition}{rePropEarlyProj}
\label{propEarlyProj}
    Let $f : \ps{X} \to \R$ and $g : \ps{Y} \to \R$ be PB functions.
    Then for all $S \subseteq X \setminus Y$, we have $\exists_S (f \cdot g) = (\exists_S f) \cdot g$ and $\Sigma_S (f \cdot g) = (\Sigma_S f) \cdot g$.
\end{restatable}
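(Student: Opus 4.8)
The plan is to prove the two identities pointwise. Fix $S \subseteq X \setminus Y$. I would first observe that by \cref{defJoin}, the PB function $f \cdot g$ has domain $\ps{X \cup Y}$, so after projecting out $S$, both $\exists_S(f \cdot g)$ and $(\exists_S f) \cdot g$ have domain $\ps{(X \cup Y) \setminus S} = \ps{(X \setminus S) \cup Y}$ (the last equality uses $S \cap Y = \emptyset$). Thus the two sides have the same domain, and it suffices to check that they agree on every $\ta \in \ps{(X \setminus S) \cup Y}$.

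The core of the argument is the following bookkeeping fact about restrictions: for any extension $\ta' \in \ps{X \cup Y}$ of $\ta$ that assigns values to the variables of $S$, we have $\restrict{\ta'}{Y} = \restrict{\ta}{Y}$, since $S$ is disjoint from $Y$; and conversely, extensions of $\ta$ over $S$ are in bijection with elements of $\ps{S}$, restricting on the $X$-side to extensions of $\restrict{\ta}{X \setminus S}$ over $S$. For the existential case I would induct on $\size{S}$. The base case $S = \emptyset$ is immediate from $\exists_\emptyset h := h$. For the inductive step, write $S = \set{x} \cup S'$ with $x \notin S'$; using commutativity of existential projection (stated after \cref{defExistProj}) and the induction hypothesis applied to $S' \subseteq X \setminus Y$, it remains to show $\exists_x(f \cdot g) = (\exists_x f) \cdot g$ for a single variable $x \in X \setminus Y$. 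For this, evaluate at $\ta \in \ps{((X \cup Y) \setminus \set{x})}$:
\begin{align*}
    \pars{\exists_x(f \cdot g)}(\ta)
    &= \max\of{(f \cdot g)(\extend{\ta}{x}{0}), (f \cdot g)(\extend{\ta}{x}{1})} \\
    &= \max\of{f(\restrict{(\extend{\ta}{x}{0})}{X}) \cdot g(\restrict{\ta}{Y}),\ f(\restrict{(\extend{\ta}{x}{1})}{X}) \cdot g(\restrict{\ta}{Y})},
\end{align*}
where in the second line I used $x \notin Y$ to drop $x$ from the $Y$-restriction. Since $g(\restrict{\ta}{Y}) \ge 0$ — here I rely on weights being nonnegative, as is standard for literal-weight functions in this setting — the common factor $g(\restrict{\ta}{Y})$ can be pulled out of the $\max$, giving $\pars{\exists_x f}(\restrict{\ta}{X \setminus \set{x}}) \cdot g(\restrict{\ta}{Y}) = \pars{(\exists_x f) \cdot g}(\ta)$, as desired. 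The additive case is analogous but easier: $+$ distributes over multiplication by the common factor $g(\restrict{\ta}{Y})$ with no sign hypothesis needed, so the same single-variable identity $\Sigma_x(f \cdot g) = (\Sigma_x f) \cdot g$ holds, and commutativity of additive projection plus induction on $\size{S}$ finishes it.

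The main obstacle is the nonnegativity issue in the existential case: pulling a factor out of a $\max$ is valid only when that factor is nonnegative, so the cleanest fix is to note that this proposition is applied only to products involving literal-weight functions with nonnegative weights (the whole framework assumes probabilities), and state that hypothesis where needed — or, alternatively, to observe that $\pb{\phi}$ and $\pb{c}$ are Boolean-valued and every $\wf_x$ is nonnegative, so every $g$ to which this is applied is nonnegative. Everything else is routine domain-and-restriction bookkeeping.
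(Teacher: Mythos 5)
Your proof takes essentially the same route as the paper's: prove the single-variable identity pointwise using the definitions of join and projection plus the disjointness $x \notin Y$, then lift to a set $S$ via commutativity of projection (the paper does the additive case in detail and declares the existential case ``similar''; you do the existential case in detail and declare the additive case easier). The one substantive difference is in your favor: you correctly observe that pulling the common factor $g\of{\restrict{\ta}{Y}}$ out of a $\max$ is only valid when that factor is nonnegative, whereas $\max(ac, bc) = c\min(a,b)$ when $c < 0$. The proposition as stated allows arbitrary $g : \ps{Y} \to \R$, so the existential half is literally false without a nonnegativity hypothesis, and the paper's proof hides this by reducing it to the additive case with the word ``similar.'' (The paper is evidently aware of the issue elsewhere --- its later lemma on derivative signs and early projection explicitly assumes ``PB functions with non-negative ranges'' --- but does not carry that hypothesis back to this proposition.) Your proposed fix, restricting to the nonnegative functions that actually arise ($\pb{\phi}$, $\pb{c}$ Boolean-valued and each $\wf_x \ge 0$), is the right one; ideally the hypothesis would be added to the statement itself for the existential identity.
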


Early projection can lead to smaller intermediate PB functions.
For example, the bottleneck in computing $\exists_S (f \cdot g)$ is $f \cdot g$ with size $s = \size{\vars{f \cdot g}} = \size{X \cup Y}$.
The bottleneck in computing $(\exists_S f) \cdot g$ is $f$ with size $s_1 = \size{\vars{f}} = \size{X}$ or is $(\exists_S f) \cdot g$ with size $s_2 = \size{\vars{(\exists_S f) \cdot g}} = \size{X \cup Y \setminus S}$.
Notice that $s \ge \max(s_1, s_2)$.
The difference is consequential since an operation on a PB function $h$ may take $\bigo{2^{\size{\vars{h}}}}$ time and space.

We can apply early projection systematically by adapting the following framework, \dpmc, which uses dynamic programming for WMC on CNF formulas \cite{dudek2020dpmc}.

\begin{definition}[Project-Join Tree]
\label{defPjt}
    Let $\phi$ be an \xcnf{} formula (\ie, a set of XOR clauses and disjunctive clauses).
    A \emph{project-join tree} for $\phi$ is a tuple $\T = \tup{T, r, \gamma, \pi}$, where:
    \begin{itemize}
        \item $\tup{T, r}$ is a rooted tree,
        \item $\gamma : \Lv \to \phi$ is a bijection, and
        \item $\pi : \V{T} \setminus \Lv \to \ps{\vars{\phi}}$ is a function.
    \end{itemize}
    A project-join must satisfy the following criteria:
    \begin{enumerate}
        \item The set $\set{\pi(v) \mid v \in \V{T} \setminus \Lv}$ is a partition of $\vars{\phi}$, where some $\pi(v)$ sets may be empty.
        \item For each internal node $v$, variable $x \in \pi(v)$, and clause $c \in \phi$, if $x \in \vars{c}$, then the leaf $\gamma^{-1}\of{c}$ is a descendant of $v$ in $\tup{T, r}$.
    \end{enumerate}
\end{definition}

For a leaf $v \in \Lv$, define $\vars{v} := \vars{\gamma(v)}$, \ie, the set of variables that appear in the clause $\gamma(v) \in \phi$.
For an internal node $v \in \V{T} \setminus \Lv$, define $\vars{v} := \pars{\bigcup_{v' \in \C{v}} \vars{v'}} \setminus \pi(v)$.

\cref{figPjt} illustrates a project-join tree.

\begin{figure}[H]
    \centering
    \begin{tikzpicture}[grow = down]
        \tikzset{level distance = 40pt, sibling distance = 20pt}
        \Tree [ .$n_{10}\pimap\emptyset$
            [ .$n_8\pimap\set{x_1}$
                [ .$n_6\pimap\set{x_2, x_4}$
                    [ .$n_1\gammamap{x_2 \oplus \neg x_4}$ ]
                ]
                [ .$n_7\pimap\set{x_6}$
                    [ .$n_2\gammamap{x_1 \vee x_6}$ ]
                ]
                [ .$n_3\gammamap{x_1}$ ]
            ]
            [ .$n_9\pimap\set{x_3, x_5}$
                [ .$n_4\gammamap{x_3 \oplus x_5}$ ]
                [ .$n_5\gammamap{\neg x_3 \vee \neg x_5}$ ]
            ]
        ]
    \end{tikzpicture}
    \caption{
        A project-join tree $\T = \tup{T, n_{10}, \gamma, \pi}$ for an \xcnf{} formula $\phi$.
        Each leaf corresponds to a clause of $\phi$ under $\gamma$.
        Each internal node corresponds to a set of variables of $\phi$ under $\pi$.
    }
\label{figPjt}
\end{figure}
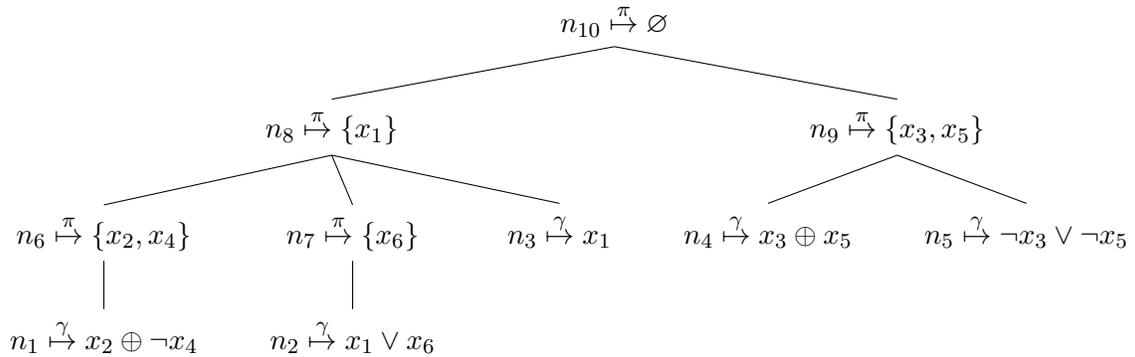

We adapt the following definition for \bmpe{} on \xcnf{} \cite{dudek2020dpmc}.
\begin{definition}[Valuation of Project-Join Tree Node]
\label{defValuation}
    Let $\phi$ be an \xcnf{} formula, $\tup{T, r, \gamma, \pi}$ be a project-join tree for $\phi$, and $\wf$ be a literal-weight function over $\vars{\phi}$.
    The \emph{$\wf$-valuation} of $v \in \V{T}$ is a PB function, denoted by $\val{v} : \ps{\vars{v}} \to \R$, defined by the following:
    \begin{align*}
        \val{v} :=
        \begin{cases}
            \pb{\gamma(v)} & \text{if } v \in \Lv \\
            \displaystyle
            \bigexists_{\pi(v)} \pars{ \prod_{v' \in \C{v}} \val{v'} \cdot \prod_{x \in \pi(n)} \wf_x } & \text{otherwise}
        \end{cases}
    \end{align*}
\end{definition}
Recall that $\pb{\gamma(v)}$ is the Boolean function represented by the clause $\gamma(v) \in \phi$.
Also, $\val{v'}$ is the $\wf$-valuation of a child $v'$ of the node $v$ in the rooted tree $\tup{T, r}$.
Note that the $\wf$-valuation of the root $r$ is a constant PB function.

\begin{definition}[Width of Project-Join Tree]
    Let $\T = \tup{T, r, \gamma, \pi}$ be a project-join tree.
    For a leaf $v$ of $\T$, define $\width{v} := \size{\vars{v}}$.
    For an internal node $v$ of $\T$, define $\width{v} := \size{\vars{v} \cup \pi(v)}$.
    The \emph{width} of $\T$ is $\width{\T} := \max_{v \in \V{T}} \width{v}$.
\end{definition}
Note that $\width{\T}$ is the maximum number of variables needed to valuate a node $v \in \V{T}$.
Valuating $\T$ may take $\bigo{2^{\width{\T}}}$ time and space.

We adapt the following correctness result \cite[Theorem 2]{dudek2020dpmc}.
\begin{restatable}[Valuation of Project-Join Tree Root]{theorem}{reThmRootVal}
\label{thmRootValuation}
    Let $\phi$ be a CNF formula, $\tup{T, r, \gamma, \pi}$ be a project-join tree for $\phi$, and $\wf$ be a literal-weight function over $X = \vars{\phi}$.
    Then $\exists_X \pars{\pb{\phi} \cdot \wf}(\emptyset) = \val{r}(\emptyset)$.
\end{restatable}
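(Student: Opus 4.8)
The plan is to induct on the structure of the rooted tree $\tup{T, r}$, proving the stronger claim that for every node $v \in \V{T}$, the $\wf$-valuation $\val{v}$ equals $\exists_{D_v} \pars{\prod_{c \in \phi_v} \pb{c} \cdot \prod_{x \in D_v} \wf_x}$, where $\phi_v \subseteq \phi$ is the set of clauses assigned (under $\gamma$) to leaves that are descendants of $v$, and $D_v := \bigcup_{u} \pi(u)$ ranges over all internal nodes $u$ in the subtree rooted at $v$. The point of this generalization is that the original statement is the special case $v = r$, for which $\phi_r = \phi$ and (by criterion~1 of \cref{defPjt}) $D_r = X$, giving $\val{r}(\emptyset) = \exists_X\pars{\pb{\phi}\cdot\wf}(\emptyset)$ after noting $\pb{\phi} = \prod_{c \in \phi}\pb{c}$.

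First I would handle the base case: if $v$ is a leaf, then $\phi_v = \set{\gamma(v)}$ and $D_v = \emptyset$, so the claimed formula reduces to $\pb{\gamma(v)}$, which is exactly $\val{v}$ by \cref{defValuation}. For the inductive step, let $v$ be an internal node with children $v_1, \ldots, v_k$, and assume the claim holds for each $v_j$. Unfolding \cref{defValuation} gives $\val{v} = \exists_{\pi(v)}\pars{\prod_j \val{v_j} \cdot \prod_{x \in \pi(v)}\wf_x}$. Substituting the inductive hypotheses $\val{v_j} = \exists_{D_{v_j}}\pars{\prod_{c \in \phi_{v_j}}\pb{c}\cdot\prod_{x \in D_{v_j}}\wf_x}$, I would then need to pull all the $\exists_{D_{v_j}}$ operators outward past the joins. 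This is where \cref{propEarlyProj} (early projection) does the work: applying it in reverse, $(\exists_{D_{v_j}} f_j)\cdot g = \exists_{D_{v_j}}(f_j \cdot g)$ whenever $D_{v_j}$ is disjoint from $\vars{g}$. The disjointness is guaranteed because the sets $\pi(u)$ over all internal nodes form a partition of $X$ (criterion~1), so the $D_{v_j}$ are pairwise disjoint and disjoint from $\pi(v)$; moreover, criterion~2 of \cref{defPjt} ensures that a variable in $D_{v_j}$ appears in no clause outside $\phi_{v_j}$, hence not in the valuations of the sibling subtrees. Combining the nested existential projections (using commutativity of $\exists$) with the outer $\exists_{\pi(v)}$ yields $\exists_{D_v}\pars{\prod_{c \in \phi_v}\pb{c} \cdot \prod_{x \in D_v}\wf_x}$, since $D_v = \pi(v) \cup \bigcup_j D_{v_j}$ is a disjoint union and $\phi_v = \bigcup_j \phi_{v_j}$.

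The main obstacle is the careful bookkeeping of which variables may be projected early at each step, i.e., verifying the side condition of \cref{propEarlyProj} at every application. Concretely, I must check that for each child $v_j$, the set $D_{v_j}$ is disjoint from $\vars{\prod_{i \ne j}\val{v_i} \cdot \prod_{x \in \pi(v)}\wf_x}$ — this is precisely where criterion~2 (the "connectedness" / descendant condition on $\pi$) is indispensable, and it is the one place where a naive argument could silently go wrong. A secondary subtlety is bookkeeping the literal-weight factors: each $\wf_x$ must be attached at exactly the internal node $v$ with $x \in \pi(v)$, and the partition property ensures no weight factor is dropped or double-counted across the recursion. Once these disjointness facts are nailed down, the rest is a routine manipulation using commutativity and associativity of join and commutativity of existential projection, both recorded in the preliminaries. (I would also remark that, although the theorem is stated for a CNF formula $\phi$, the identical proof goes through verbatim for \xcnf, since it uses only the factorization $\pb{\phi} = \prod_{c \in \phi}\pb{c}$ and never the internal form of a clause.)
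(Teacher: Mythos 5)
Your proposal is correct and is essentially the paper's own argument: the paper's proof of \cref{thmRootValuation} simply defers to the structural-induction proof of the analogous WMC theorem in the \dpmc{} paper with additive projection replaced by existential projection, and what you have written out is exactly that induction (a strengthened per-node claim, with \cref{propEarlyProj} justified at each step by the partition criterion and the descendant criterion of \cref{defPjt}). One caveat, which applies equally to the paper's version: the existential form of early projection, unlike the additive form, requires the factor being pulled out to be joined only with non-negative functions (since $\max(ab,cb)=b\max(a,c)$ only when $b\ge 0$); this holds here because clause functions and literal weights are non-negative, but it is left implicit in \cref{propEarlyProj} and in your application of it.
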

In other words, the valuation of the root $r$ is a constant PB function that maps $\emptyset$ to the maximum of $\pb{\phi} \cdot \wf$.

We now introduce \cref{dpAlgo}, which is more efficient than \cref{monoAlgo} due to the use of a project-join tree to systematically apply early projection.

\clearpage

\newcommand{\gx}{d} 

\begin{algorithm}[H]
\caption{Dynamic Programming for \bmpe}
\label{dpAlgo}
    \KwIn{$\phi$: an \xcnf{} formula}
    \KwIn{$\wf$: a literal-weight function over $X = \vars{\phi}$}
    \KwOut{$m \in \R$: the maximum of $\pb{\phi} \cdot \wf$}
    \KwOut{$\ta \in \ps{X}$: a maximizer of $\pb{\phi} \cdot \wf$}

    \DontPrintSemicolon
    $\T = \tup{T, r, \gamma, \pi} \gets$ a project-join tree for $\phi$\;
    $\stack \gets \tup{}$ \tcp{an initially empty stack}
    $\val{r} \gets \valuator(\phi, \T, \wf, r, \stack)$ \tcp{$\valuator$ (\cref{valuatorAlgo}) pushes derivative signs onto $\stack$}
    $m \gets \val{r}(\emptyset)$ \tcp{$\val{r}$ is a constant PB function}
    $\ta \gets \emptyset$ \tcp{an initially empty truth assignment}
    \While{$\stack$ is not empty}{
        $\gx \gets \pop{\stack}$ \tcp{$\gx = \dsgn_x f$ is a derivative sign, where $x$ is an unassigned variable and $f$ is a PB function}
        $\ta \gets \ta \cup \gx\of{\restrict{\ta}{\dom{\gx}}}$ \tcp{$\gx\of{\restrict{\ta}{\dom{\gx}}} = \set{\va{x}{b}}$, where $b \in \B$}
    }
    \Return $\tup{m, \ta}$ \tcp{all variables have been assigned in $\ta$}
\end{algorithm}

\begin{algorithm}[H]
\caption{$\valuator(\phi, \T, \wf, v, \stack)$}
\label{valuatorAlgo}
    \KwIn{$\phi$: an \xcnf{} formula}
    \KwIn{$\T = \tup{T, r, \gamma, \pi}$: a project-join tree for $\phi$}
    \KwIn{$\wf$: a literal-weight function over $\vars{\phi}$}
    \KwIn{$v \in \V{T}$: a node}
    \KwIn{$\stack$: a stack (of derivative signs) that will be modified}
    \KwOut{$\val{v}$: the $\wf$-valuation of $v$}

    \DontPrintSemicolon
    \If{$v \in \Lv$}{
        $f \gets \pb{\gamma(v)}$ \tcp{the Boolean function represented by the clause $\gamma(v) \in \phi$}
    }
    \Else(\tcp*[h]{$v$ is an internal node of $\tup{T, r}$}){
        $f \gets \prod_{v' \in \C{v}} \valuator(\phi, \T, \wf, v', \stack)$\;
        \For{$x \in \pi(v)$}{
            $\push{\stack}{\dsgn_x f}$ \tcp{$\stack$ is used to construct a maximizer (\cref{dpAlgo})}
            $f \gets \exists_x \pars{f \cdot \wf_x}$
        }
    }
    \Return $f$
\end{algorithm}

\begin{lemma}[Correctness of \cref{valuatorAlgo}]
\label{lemmaValuator}
    \cref{valuatorAlgo} returns the $\wf$-valuation of the input project-join tree node.
\end{lemma}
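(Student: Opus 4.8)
The plan is to prove the statement by structural induction on the project-join tree node $v$ passed to \cref{valuatorAlgo}, following the recursion of the algorithm and the case split in \cref{defValuation}. The base case is when $v \in \Lv$: \cref{valuatorAlgo} sets $f \gets \pb{\gamma(v)}$ and returns it without modifying $\stack$, which is exactly $\val{v}$ by \cref{defValuation}, so nothing more is needed here.

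For the inductive step, suppose $v$ is an internal node and assume the claim for every child $v' \in \C{v}$ (each of which lies in a strictly smaller subtree). Then each recursive call $\valuator(\phi, \T, \wf, v', \stack)$ returns $\val{v'}$, so after the product line the local variable $f$ equals $g := \prod_{v' \in \C{v}} \val{v'}$. Writing $\pi(v) = \set{y_1, \ldots, y_k}$ in the order in which the \texttt{for} loop processes them, I will show by a secondary induction on $j$ that after $j$ iterations ($j = 0$ meaning just before the loop) the value of $f$ is $\exists_{\set{y_1, \ldots, y_j}} \pars{g \cdot \prod_{i=1}^{j} \wf_{y_i}}$; the calls $\push{\stack}{\dsgn_x f}$ are irrelevant to this, since \texttt{push} mutates only $\stack$. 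The case $j = 0$ is the observation just made. For the inductive step, the loop body replaces $f$ by $\exists_{y_{j+1}} \pars{f \cdot \wf_{y_{j+1}}}$; because $y_1, \ldots, y_{j+1}$ are pairwise distinct and each $\wf_{y_i}$ has variable set $\set{y_i}$, we have $\set{y_1, \ldots, y_j} \subseteq \vars{g \cdot \prod_{i=1}^{j} \wf_{y_i}} \setminus \vars{\wf_{y_{j+1}}}$, so \cref{propEarlyProj} lets us rewrite $f \cdot \wf_{y_{j+1}}$ as $\exists_{\set{y_1, \ldots, y_j}} \pars{g \cdot \prod_{i=1}^{j+1} \wf_{y_i}}$, and commutativity of existential projection then merges the outer $\exists_{y_{j+1}}$ to yield $\exists_{\set{y_1, \ldots, y_{j+1}}} \pars{g \cdot \prod_{i=1}^{j+1} \wf_{y_i}}$. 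Setting $j = k$ and using associativity and commutativity of join gives that the returned value equals $\bigexists_{\pi(v)} \pars{\prod_{v' \in \C{v}} \val{v'} \cdot \prod_{x \in \pi(v)} \wf_x} = \val{v}$ by \cref{defValuation}, completing the induction.

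I expect the only delicate point to be this secondary induction on the inner loop: one must track carefully which variable sets are disjoint at each step so that \cref{propEarlyProj} applies, keeping in mind that each weight factor $\wf_{y_i}$ mentions only $y_i$ whereas $g$ may mention several of the $y_i$. The remaining ingredients---the leaf base case, the fact that the \texttt{push} operations do not affect the returned function, and termination of the recursion on the finite tree $T$---are all immediate.
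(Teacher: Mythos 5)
Your proof is correct and takes essentially the same route as the paper, which disposes of this lemma in two sentences (``\cref{valuatorAlgo} implements \cref{defValuation}; modifying the input stack $\stack$ does not affect how the output valuation is computed''). Your secondary induction over the inner loop, using \cref{propEarlyProj} to show that interleaving $\exists_x$ with multiplication by $\wf_x$ agrees with the batch form $\bigexists_{\pi(v)}\pars{\prod_{v' \in \C{v}} \val{v'} \cdot \prod_{x \in \pi(v)} \wf_x}$, is exactly the content the paper's one-liner leaves implicit, so nothing is missing.
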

\begin{proof}
    \cref{valuatorAlgo} implements \cref{defValuation}.
    Modifying the input stack $\stack$ does not affect how the output valuation $f$ is computed.
\end{proof}

\begin{restatable}[Correctness of \cref{dpAlgo}]{theorem}{reThmDp}
\label{thmDpAlgo}
    Let $\phi$ be an \xcnf{} formula, $\tup{T, r, \gamma, \pi}$ be a project-join tree for $\phi$, and $\wf$ be a literal-weight function over $\vars{\phi}$.
    Then \cref{dpAlgo} solves \bmpe{} on $\tup{\phi, \wf}$.
\end{restatable}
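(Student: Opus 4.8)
The plan is to check the two outputs of \cref{dpAlgo} separately. That $m$ is correct is immediate: by \cref{lemmaValuator} the call to $\valuator$ returns $\val{r}$, and \cref{thmRootValuation}---whose proof uses only the factorization $\pb{\phi}=\prod_{c\in\phi}\pb{c}$, which still holds when $\phi$ is \xcnf{}---gives $\val{r}(\emptyset)=\pars{\exists_X\pars{\pb{\phi}\cdot\wf}}(\emptyset)$, which by \cref{defMaximum} is the maximum of $\pb{\phi}\cdot\wf$. So the real work is to show that the $\ta$ returned by \cref{dpAlgo} is a maximizer of $\pb{\phi}\cdot\wf$.

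The idea for $\ta$ is that the stack $\stack$, read in pop order, supplies exactly one derivative sign per variable and that the while-loop of \cref{dpAlgo} realizes a chain of applications of \cref{propIterMax}, just as in \cref{monoAlgo}. List the variables as $y_1,\dots,y_n$ in the order in which their derivative signs are \emph{pushed} (each is pushed once, since $\set{\pi(v)\mid v\in\V{T}\setminus\Lv}$ partitions $\vars{\phi}$); they are popped in the reverse order $y_n,\dots,y_1$. For $0\le i\le n$, put $f_i:=\exists_{\{y_1,\dots,y_i\}}\pars{\pb{\phi}\cdot\wf}$, so that $f_0=\pb{\phi}\cdot\wf$, $\vars{f_i}=\{y_{i+1},\dots,y_n\}$, $f_i=\exists_{y_i}f_{i-1}$, and $f_n(\emptyset)=m$. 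By \cref{propIterMax}, if $\ta^{(i)}$ is a maximizer of $f_i$ then $\ta^{(i)}\cup\pars{\dsgn_{y_i}f_{i-1}}(\ta^{(i)})$ is a maximizer of $f_{i-1}$; beginning with the trivial maximizer $\ta^{(n)}=\emptyset$ of the constant $f_n$ and iterating yields a maximizer $\ta^{(0)}$ of $\pb{\phi}\cdot\wf$. I would then prove, by downward induction on $i$, that after the $(n-i)$-th iteration of the while-loop the current value of $\ta$ equals $\ta^{(i)}$; the loop terminates once all $n$ derivative signs have been popped, leaving $\ta=\ta^{(0)}$ with every variable assigned---exactly the required maximizer.

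The inductive step is the crux, and it has two ingredients. The iteration that extends $\ta^{(i)}$ pops the derivative sign $d$ that $\valuator$ recorded while eliminating $y_i$ at the unique internal node $v$ with $y_i\in\pi(v)$; here $d=\dsgn_{y_i}h$, where $h$ is the running product at $v$ \emph{after} $\wf_{y_i}$ has been multiplied in, so that $h$ carries the entire dependence of the computation on $y_i$ (for this to hold, the sign recorded in \cref{valuatorAlgo} should be that of $f\cdot\wf_x$, i.e.{} taken after the weight of $x$ is folded in). The first ingredient is bookkeeping: when $d$ is popped, every variable of $h$ other than $y_i$ has already been assigned in $\ta$. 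Indeed, by the two conditions of \cref{defPjt}, any such variable lies in $\pi(v)$ (and is projected after $y_i$ in the $\pi(v)$-loop) or in $\pi(w)$ for a proper ancestor $w$ of $v$; either way its derivative sign is pushed after $d$---later loop iterations come later, and $\valuator$ finishes the whole subtree of $v$ (so pushes $d$) before running the $\pi(w)$-loop of any ancestor $w$---so it sits above $d$ on the stack and is popped, hence assigned, first. Thus $\restrict{\ta}{\dom{d}}=\restrict{\ta^{(i)}}{\vars{h}\setminus\{y_i\}}$ is a total assignment while $y_i\notin\dom{\ta}$, so the update is well defined and assigns $y_i$. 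The second ingredient---the main obstacle---is that $d$ must select the same literal as $\dsgn_{y_i}f_{i-1}$ does at $\ta^{(i)}$. For this I would use the second condition of \cref{defPjt} (every clause mentioning $y_i$ occurs at a leaf below $v$): with repeated early projection (\cref{propEarlyProj}) it gives $f_{i-1}=h\cdot R$ for a PB function $R$ with $y_i\notin\vars{R}$; since $R$ is a product of $\{0,1\}$-valued Boolean functions and (nonnegative) literal weights, $R\ge 0$, so multiplying by $R$ does not change which value of $y_i$ maximizes $h$, and where $R$ vanishes, $f_{i-1}$ is already $0$ for both values of $y_i$, so the argmax reasoning behind \cref{propIterMax} still leaves $\ta^{(i-1)}$ a maximizer. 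Establishing this factorization $f_{i-1}=h\cdot R$ is the technical heart---it is the mid-computation form of the invariant behind \cref{thmRootValuation}, namely that at every point during \cref{dpAlgo} the product of all PB functions currently in play equals $\pb{\phi}\cdot\wf$ with a prefix of $y_1,\dots,y_n$ existentially projected.
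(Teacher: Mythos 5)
Your argument follows the same route as the paper's proof: the maximum is dispatched by \cref{lemmaValuator} together with \cref{thmRootValuation}, and the maximizer is recovered by iterating \cref{propIterMax} along the pop order of the stack. Your ``mid-computation invariant''---that the product of all PB functions currently in play equals $\pb{\phi}\cdot\wf$ with the already-eliminated variables existentially projected---is precisely the pre-/join-/project-/post-condition maintained in the paper's annotated \cref{valuatorAlgoA}; your stack-ordering bookkeeping matches the paper's treatment of the pop loop; and your observation that multiplying by a non-negative co-factor not mentioning $y_i$ preserves the derivative sign (with the degenerate case where the co-factor vanishes handled separately) is the content of the paper's \cref{lemmaDsgnEarlyProjA,lemmaMaximizerDsgnA}.

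One remark: your parenthetical---that the recorded sign must be that of $f\cdot\wf_x$, taken \emph{after} the weight of $x$ is folded in---is not cosmetic; it catches a genuine gap in the paper's own treatment. \cref{valuatorAlgo} as written pushes $\dsgn_x f$ \emph{before} multiplying in $\wf_x$, and the paper's \cref{lemmaMaximizerPushA} then applies \cref{lemmaMaximizerDsgnA} with the co-factor equal to the product of the remaining active functions, which at that moment still contains $\wf_x$; the hypothesis $x\notin\vars{g}$ of that lemma is therefore violated. When $\wf_x$ is asymmetric (for example, weight $100$ for $x\mapsto 0$ and $10$ for $x\mapsto 1$, with $x$ unconstrained by the clauses in the current subtree), $\dsgn_x f$ and $\dsgn_x(f\cdot\wf_x)$ genuinely disagree, so the correction you propose is needed for the maximizer half of the theorem to hold. (Both your argument and the paper's also tacitly require the literal weights to be non-negative, which the theorem statement does not impose; you at least make this assumption explicit.)
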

\begin{proof}
    See \cref{secProofDp}.
\end{proof}

\cref{dpAlgo} comprises two phases: a \emph{planning phase} that builds a project-join tree $\T$ and an \emph{execution phase} that valuates $\T$.
The weight function $\wf$ is needed only in the execution phase.

We implemented \cref{dpAlgo} as \dpo, a dynamic-programming optimizer.
\dpo{} uses \planner{} \cite{dudek2020dpmc}, a planning tool that invokes \flowcutter, a solver \cite{strasser2017computing} for \emph{tree decompositions} \cite{robertson1991graph}.
A tree decomposition of a graph $G$ is a tree $T$, where each node of $T$ corresponds to a set of vertices of $G$ (plus other technical criteria).

Also, \dpo{} extends \executor{} \cite{dudek2020dpmc}, an execution tool that manipulates PB functions using \emph{algebraic decision diagrams (ADDs)} \cite{bahar1997algebraic}.
An ADD is a directed acyclic graph that can compactly represent a PB function and support some polynomial-time operations.
ADDs generalize \emph{binary decision diagrams (BDDs)} \cite{bryant1986graph}, which are used to manipulate Boolean functions.
ADDs and BDDs are implemented in the \tool{CUDD} library \cite{somenzi2015cudd}.

\cref{figAdd} illustrates an ADD.

\begin{figure}[H]
    \centering
    \includegraphics[
        height = 200pt,
        trim = {140pt 35pt 7pt 35pt} 
    ]
    {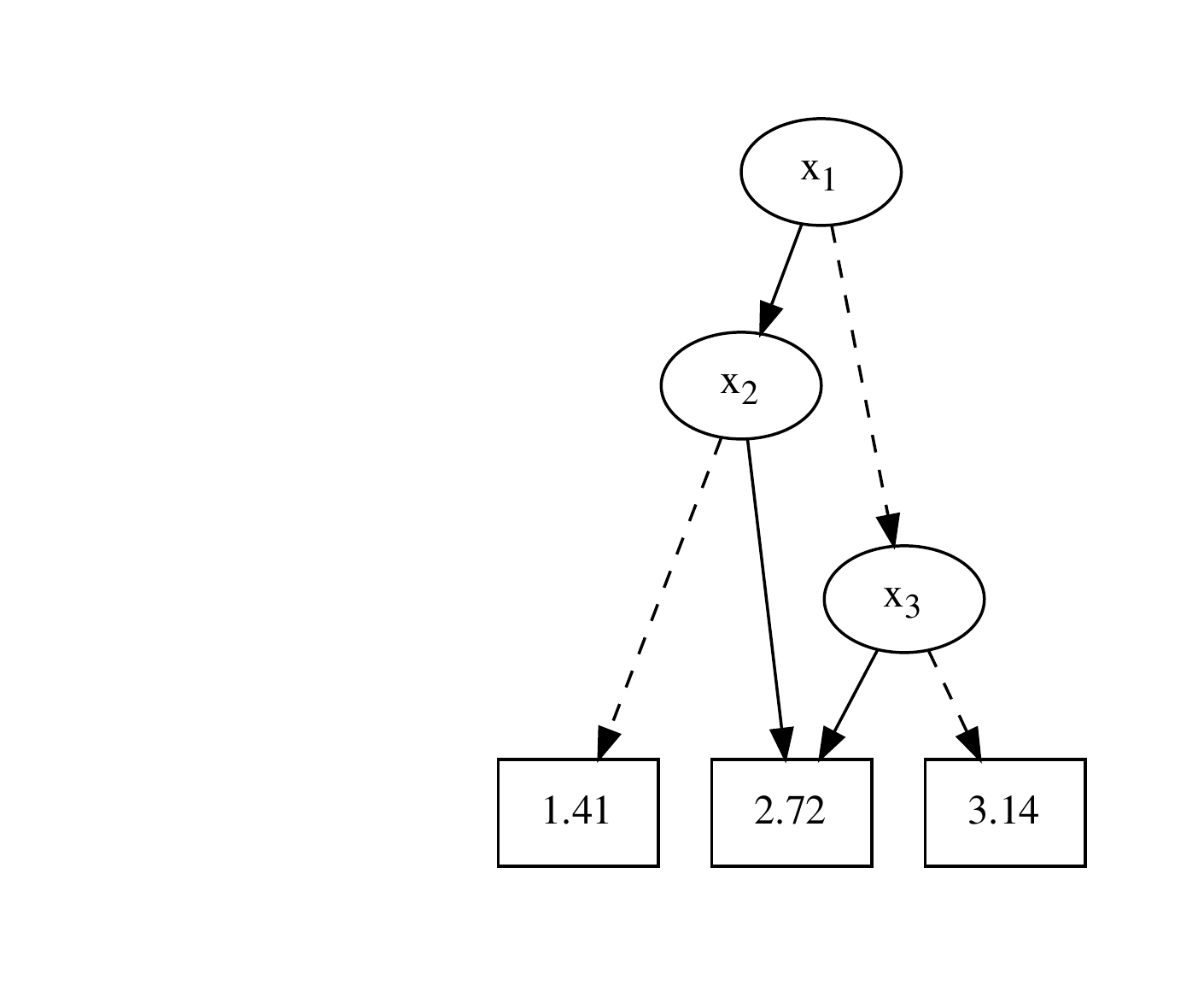}
    \caption{
        An ADD representing a PB function.
        If an edge from an oval node is solid (respectively dashed), then the corresponding variable is assigned $1$ (respectively $0$).
    }
\label{figAdd}
\end{figure}


\section{Evaluation}

We conducted computational experiments to answer the following \emph{empirical questions}.
\begin{enumerate}
    \item Does \dpo{} significantly contribute to a portfolio of state-of-the-art exact solvers on application benchmarks that are encoded as CNF formulas?
    \item Is there a class of \xcnf{} benchmarks on which \dpo{} outperforms existing tools?
\end{enumerate}

We used a high-performance computing cluster.
Each solver-benchmark pair was exclusively run on a single core of an Intel Xeon CPU (E5-2650 v2 at 2.60GHz) with a RAM cap of 25 GB and a time cap of 1000 seconds.

Source code, benchmarks, and experimental data are available in a public repository:
\begin{center}
    \url{https://github.com/vuphan314/DPO}
\end{center}

\subsection{Solvers}

We are aware of only one native \bmpe{} solver \cite{sang2007dynamic}, but its code is no longer available, according to the authors.
Fortunately, \bmpe{} can be reduced to weighted partial \ms{} as described in \cref{secRelatedWork}.
The only existing \xcnf{} \ms{} solver we know is \gauss{} \cite{soos2021gaussian}.

We also considered the top three solvers in the complete weighted track of the \ms{} Evaluation 2021: \cash{} \cite{lei2021cashwmaxsat}, \maxhs{} \cite{davies2011solving}, and \uwr{} \cite{piotrow2020uwrmaxsat}.
But we had to exclude \cash{} because it did not report numeric optimal costs; we contacted the authors and are waiting for their response.
So we compared three \ms{} solvers, \maxhs, \uwr, and \gauss, to our \bmpe{} tool, \dpo.
Since \maxhs{} and \uwr{} only work on pure CNF, we employed the Tseitin transformation on benchmarks in \xcnf.
\ms{} instances were created before \ms{} solvers were run, so the \ms{} reduction time and CNF encoding time were excluded from the total solving time.

\subsection{Benchmarks}

We used \cms{} \cite{soos2009extending} to guarantee that every instance is satisfiable.
The first benchmark suite comprises \bayescnfs{} literal-weighted CNF instances that were derived from Bayesian networks \cite{sang2005performing}.
The second suite was generated by us.
Adapting a recent study on \ms{} \cite{kyrillidis2022dpms}, we created random \emph{chain formulas} in \xcnf{} that have low-width project-join trees.
For given integers $n$ and $k$, a chain formula is a conjunction of $n - k + 1$ clauses.
Clause $i$ involves $k$ variables: $x_i, x_{i + 1}, \ldots, x_{i + k -1}$.
Each clause is randomly an XOR or a disjunction of literals.
The polarity of each literal is also uniformly randomized.
Each variable $x$ has random weights: $\wf_x\of{\va{x}{0}} = 10$ and $\wf_x\of{\va{x}{1}} = 100$, or vice versa.
For such a formula, there is a simple (left-deep) project-join tree with width $k$.
We generated \chaincnfs{} chain benchmarks with $n = 100, 110, \ldots, 300$ and $k = 10, 11, \ldots, 30$.
Recall that \dpo{} and \gauss{} natively handle \xcnf.
Since neither \maxhs{} nor \uwr{} accepts XOR, we employed the Tseitin transformation to obtain pure-CNF benchmarks, using \tool{PyEDA} \cite{drake2015pyeda}.

\subsection{Performance}

On the Bayesian benchmark suite, \maxhs{} performed very well, solving all \bayescnfs{} instances.
\dpo{} only solved 1014 and was faster than \maxhs{} on three of these benchmarks.
There was no need to run \uwr{} or \gauss.
The answer to the first empirical question is clear: on these application benchmarks, \dpo{} did not significantly contribute to the state of the art.

On the random chain formulas in \xcnf, \dpo{} outperformed \maxhs, \uwr, and \gauss.
An advantage \gauss{} and \dpo{} had was being able to directly process XOR, while \maxhs{} and \uwr{} had to solve larger CNF formulas after the Tseitin transformation.
See \cref{tableChain,figChain} for more detail on the chain benchmarks.

\begin{table}[H]
    \centering
    \caption{
        On most of the \chaincnfs{} chain benchmarks, \dpo{} was the fastest solver.
        \vbs0, a \emph{virtual best solver}, simulates running \maxhs, \uwr, and \gauss{} concurrently and finishing once one of these three actual solvers succeeds.
        \vbs1 simulates the portfolio of \maxhs, \uwr, \gauss, and \dpo.
    }
    \begin{tabular}{|l|r|r|r|r|r|} \hline
        \multirow{2}{*}{Solver} & \multirow{2}{*}{Mean peak RAM (GB)} & \multicolumn{3}{c|}{Benchmarks solved (of \chaincnfs)} & \multirow{2}{*}{Mean PAR-2 score} \\ \cline{3-5}
        & & Unique & Fastest & Total & \\ \hline
        \maxhs  & 0.10  &  0    &              0    & \chaincnfs    &  82.3 \\ \hline
        \uwr    & 0.07  &  0    &             12    &        329    & 624.8 \\ \hline
        \gauss  & 0.02  &  0    &             58    & \chaincnfs    &   6.1 \\ \hline
        \dpo    & 0.03  &  0    & \fastchaincnfs    & \chaincnfs    &   1.0 \\ \hline
        \vbs0   &   NA  & NA    &             NA    & \chaincnfs    &   5.6 \\ \hline
        \vbs1   &   NA  & NA    &             NA    & \chaincnfs    &   0.3 \\ \hline
    \end{tabular}
\label{tableChain}
\end{table}
\begin{figure}[H]
    \centering
    \input{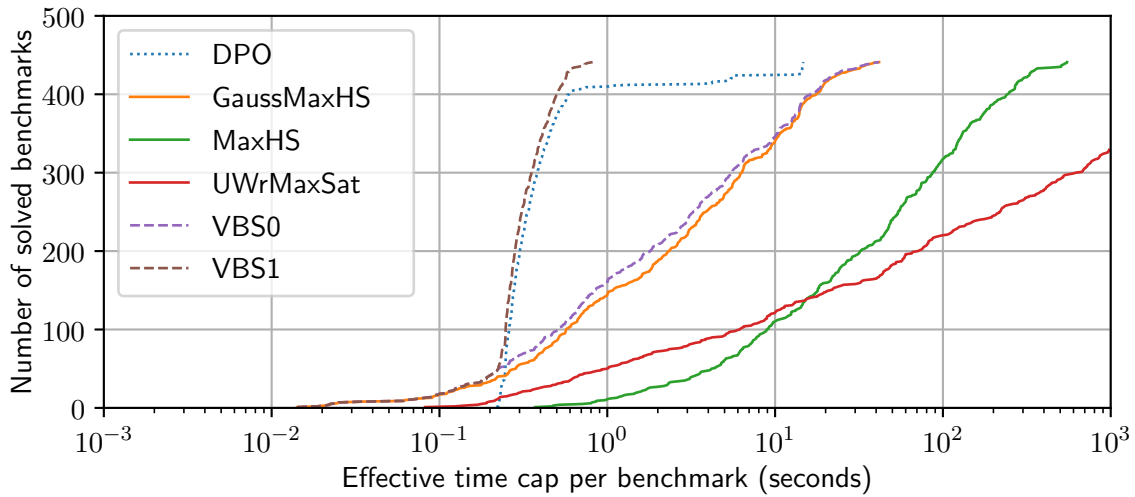}
    \caption{
        On this plot, each $\tup{x, y}$ point on a solver curve means that the solver would solve $y$ benchmarks in total if the time cap was $x$ seconds per benchmark.
        \dpo{} was the fastest solver on \fastchaincnfs{} of \chaincnfs{} chain benchmarks.
        The improvement of the virtual best solver \vbs1 over \vbs0 is due to \dpo.
    }
\label{figChain}
\end{figure}

Clearly, solvers that only accept pure-CNF benchmarks suffer when there are many hybrid constraints, even with the Tseitin transformation.
To answer the second empirical question, we identified a class of \xcnf{} chain formulas on which \dpo{} outperformed state-of-the-art \ms{} solvers.
This class includes hybrid Boolean formulas that have low-width project-join trees.


\section{Conclusion}

We introduce \dpo, a dynamic-programming optimizer that exactly solves \bmpe.
\dpo{} leverages techniques to build and execute project-join trees from a WMC solver \cite{dudek2020dpmc}.
\dpo{} also adapts an iterative procedure to find maximizers from \ms{} \cite{kyrillidis2022dpms}.
Our experiments show that \dpo{} can outperform state-of-the-art \ms{} solvers (\maxhs{} \cite{davies2011solving}, \uwr{} \cite{piotrow2020uwrmaxsat}, and \gauss{} \cite{soos2021gaussian}) by handling \xcnf{} natively and exploiting low-width project-join trees.

For future work, we plan to add support for hybrid inputs, such as PB and cardinality constraints.
Also, \dpo{} can be extended to solve more general problems, \eg, \emph{existential-random stochastic satisfiability} \cite{lee2018solving}, \emph{maximum model counting} \cite{fremont2017maximum}, and \emph{functional aggregate queries} \cite{abo2016faq}.
Another research direction is to improve \dpo{} with parallelism, as a portfolio solver (\eg, \cite{xu2008satzilla}) or with a multi-core ADD package (\eg, \cite{van2015sylvan}).


\appendix
\section{Proofs}


\subsection{Proof of \cref{propIterMax}}

\rePropIterMax*
\begin{proof}
    By \cref{defDsgn} (derivative sign), we have $\pars{\dsgn_x f}(\ta) = \set{\va{x}{1}}$ if $f(\extend{\ta}{x}{1}) \ge f(\extend{\ta}{x}{0})$, and $\pars{\dsgn_x f}(\ta) = \set{\va{x}{0}}$ otherwise.
    First, assume the former case, $\pars{\dsgn_x f}(\ta) = \set{\va{x}{1}}$.
    Then:
    \begin{align*}
        f\of{\ta \cup \pars{\dsgn_x f}(\ta)}
        & = f\of{\extend{\ta}{x}{1}}
        \\ & = \max\of{f\of{\extend{\ta}{x}{0}}, f\of{\extend{\ta}{x}{1}}} \tag*{as we assumed the former case}
        \\ & = (\exists_x f)(\ta) \tag*{by \cref{defExistProj} (existential projection)}
        \\ & = \pars{\exists_{X \setminus \set{x}}(\exists_x f)}(\emptyset) \tag*{since $\ta$ is a maximizer of $\exists_x f : \ps{X \setminus \set{x}} \to \R$}
        \\ & = \pars{\exists_X f}(\emptyset) \tag*{because existential projection is commutative}
    \end{align*}
    So $\ta \cup \pars{\dsgn_x f}(\ta)$ is a maximizer of $f : \ps{X} \to \R$ by \cref{defMaximizer} (maximizer).
    The latter case, $\pars{\dsgn_x f}(\ta) = \set{\va{x}{0}}$, is similar.
\end{proof}


\subsection{Proof of \cref{propMonoAlgo}}

\rePropMono*
\begin{proof}
    We prove that the two outputs of the algorithm are the maximum and a maximizer of $f_n$, as requested by the \bmpe{} problem (\cref{defBmpe}).

    Regarding the {first output} ($m$), on \cref{lineMonoMaximum} of \cref{monoAlgo}, note that $f_0 = \exists_{x_1} \exists_{x_2} \ldots \exists_{x_{n - 1}} \exists_{x_n} f_n = \exists_{X_n} f_n$.
    Then $m = f_0(\emptyset) = (\exists_{X_n} f_n)(\emptyset)$, which is the maximum of $f_n$ by \cref{defMaximum}.

    Regarding the {second output} ($\ta_n$), Lines \ref{lineMonoTa0}-\ref{lineMonoTaN} of the algorithm iteratively compute each truth assignment $\ta_i$ that is a maximizer of the PB function $f_i$.
    This process of iterative maximization is correct due to \cref{propIterMax}.
    Finally, $\ta_n$ is a maximizer of $f_n$.
\end{proof}


\subsection{Proof of \cref{propEarlyProj}}

\rePropEarlyProj*
\begin{proof}
    Let $x \in S$ be a variable.
    We first show that $\Sigma_x \pars{f \cdot g} = \pars{\Sigma_x f} \cdot g : \ps{X \cup Y \setminus \set{x}} \to \R$.
    For each truth assignment $\ta : X \cup Y \setminus \set{x} \to \B$, we have:
    \begin{align*}
        & \pars{\Sigma_x \pars{f \cdot g}}(\ta)
        \\ & = \pars{f \cdot g}\of{\extend{\ta}{x}{1}} + \pars{f \cdot g}\of{\extend{\ta}{x}{0}} \tag*{by \cref{defAddProj} (additive projection)}
        \\ & = f\of{\restrict{\extend{\ta}{x}{1}}{X}} \cdot g\of{\restrict{\extend{\ta}{x}{1}}{Y}} + f\of{\restrict{\extend{\ta}{x}{0}}{X}} \cdot g\of{\restrict{\extend{\ta}{x}{0}}{Y}} \tag*{by \cref{defJoin} (join)}
        \\ & = f\of{\restrict{\extend{\ta}{x}{1}}{X}} \cdot g\of{\restrict{\ta}{Y}} + f\of{\restrict{\extend{\ta}{x}{0}}{X}} \cdot g\of{\restrict{\ta}{Y}} \tag*{since $x \notin Y$}
        \\ & = \pars{f\of{\restrict{\extend{\ta}{x}{1}}{X}} + f\of{\restrict{\extend{\ta}{x}{0}}{X}}} \cdot g\of{\restrict{\ta}{Y}} \tag*{by common factor}
        \\ & = \pars{f\of{\extend{\restrict{\ta}{X}}{x}{1}} + f\of{\extend{\restrict{\ta}{X}}{x}{0}}} \cdot g\of{\restrict{\ta}{Y}} \tag*{as $x \in X$}
        \\ & = \pars{\Sigma_x f}\of{\restrict{\ta}{X}} \cdot g\of{\restrict{\ta}{Y}} \tag*{by definition of additive projection}
        \\ & = \pars{\Sigma_x f}\of{\restrict{\ta}{X \setminus \set{x}}} \cdot g\of{\restrict{\ta}{Y}} \tag*{because $x \notin \dom{\ta} = X \cup Y \setminus \set{x}$}
        \\ & = \pars{\Sigma_x f \cdot g}(\ta) \tag*{by definition of join}
    \end{align*}
    Since additive projection is commutative, we can generalize this equality from a single variable $x \in S$ to an equality on a whole set $S$ of variables: $\Sigma_S (f \cdot g) = (\Sigma_S f) \cdot g$.
    The case of existential projection, $\exists_S (f \cdot g) = (\exists_S f) \cdot g$, is similar.
\end{proof}


\subsection{Proof of \cref{thmRootValuation}}

\reThmRootVal*
\begin{proof}
    This theorem concerns \bmpe.
    A very similar theorem concerns WMC \cite[Theorem 2]{dudek2020dpmc}.
    We simply adapt that proof \cite[Section C.2]{dudek2020dpmc} and replace additive projection with existential projection.
\end{proof}


\subsection{Proof of \cref{thmDpAlgo}}
\label{secProofDp}

\cref{thmDpAlgo} concerns \cref{dpAlgo,valuatorAlgo}.
We actually prove the correctness of their annotated versions, which are respectively \cref{dpAlgoA,valuatorAlgoA}.

\newcommand{\actives}{A} 
\newcommand{\elims}{E} 
\newcommand{\childval}{h}

To simplify notations, for a multiset $\actives$ of PB functions $a$, define $\pb{\actives} := \prod_{a \in \actives} a$ and $\vars{\actives} := \bigcup_{a \in \actives} \vars{a}$.

\clearpage

\begin{algorithm}[H]
\caption{Dynamic Programming for \bmpe}
\label{dpAlgoA}
    \KwIn{$\phi$: an \xcnf{} formula}
    \KwIn{$\wf$: a literal-weight function over $X = \vars{\phi}$}
    \KwOut{$m \in \R$: the maximum of $\pb{\phi} \cdot \wf$}
    \KwOut{$\ta \in \ps{X}$: a maximizer of $\pb{\phi} \cdot \wf$}

    \DontPrintSemicolon
    $\T = \tup{T, r, \gamma, \pi} \gets$ a project-join tree for $\phi$\;
    $\stack \gets \tup{}$ \tcp{an initially empty stack}
    $\elims \gets \emptyset$ \tcp{$\elims \subseteq X$ is a set of variables that have been eliminated via projection} \label{lineInitProjVarsA}
    $\actives \gets \set{\pb{c} \mid c \in \phi} \cup \set{\wf_x \mid x \in X}$ \tcp{$\actives$ is a multiset of ``active'' PB functions} \label{lineInitActivesA}
    $\val{r} \gets \valuator(\phi, \T, \wf, r, \stack, \elims, \actives)$ \tcp{$\valuator$ (\cref{valuatorAlgoA}) modifies $\stack$, $\elims$, and $\actives$} \label{lineValRootA}
    $m \gets \val{r}(\emptyset)$ \tcp{$\val{r}$ is a constant PB function}
    $\ta \gets \emptyset$ \tcp{an initially empty truth assignment}
    \Assert{$\ta$ is a maximizer of $\exists_\elims \pars{\pb{\phi} \cdot \wf}$} \tcp{$\elims = X$ after \cref{lineValRootA}} \label{lineMaximizerConstA}
    \While{$\stack$ is not empty}{
        $\gx \gets \pop{\stack}$ \tcp{$\gx = \dsgn_x f$ is a derivative sign, where $x \in \elims$ is an unassigned variable and $f$ is a PB function}
        $\ta' \gets \ta$\;
        $\ta \gets \ta' \cup \gx\of{\restrict{\ta'}{\dom{\gx}}}$ \tcp{$\gx\of{\restrict{\ta'}{\dom{\gx}}} = \set{\va{x}{b}}$, where $b \in \B$}
        $\elims \gets \elims \setminus \set{x}$ \tcp{$x$ has just been assigned $b$ in $\ta$}
        \Assert{$\ta$ is a maximizer of $\exists_\elims \pars{\pb{\phi} \cdot \wf}$}\; \label{lineMaximizerPopA}
    }
    \Return $\tup{m, \ta}$ \tcp{$\elims = \emptyset$ now (all variables have been assigned in $\ta$)}
\end{algorithm}

\clearpage

\newcommand{\child}{u}

\begin{algorithm}[H]
\caption{$\valuator(\phi, \T, \wf, v, \stack, \elims, \actives)$}
\label{valuatorAlgoA}
    \KwIn{$\phi$: an \xcnf{} formula}
    \KwIn{$\T = \tup{T, r, \gamma, \pi}$: a project-join tree for $\phi$}
    \KwIn{$\wf$: a literal-weight function over $\vars{\phi}$}
    \KwIn{$v \in \V{T}$: a node}
    \KwIn{$\stack$: a stack (of derivative signs) that will be modified}
    \KwIn{$\elims$: a set (of variables) that will be modified}
    \KwIn{$\actives$: a multiset (of PB functions) that will be modified}
    \KwOut{$\val{v}$: the $\wf$-valuation of $v$}

    \DontPrintSemicolon
    \Assert{$\pb{\actives} = \exists_\elims \pars{\pb{\phi} \cdot \wf}$} \tcp{pre-condition} \label{linePrecondA}
    \If{$v \in \Lv$}{
        $f \gets \pb{\gamma(v)}$ \tcp{the Boolean function represented by the clause $\gamma(v) \in \phi$}
    }
    \Else(\tcp*[h]{$v$ is an internal node of $\tup{T, r}$}){
        $f \gets \set{\tup{\emptyset, 1}}$ \tcp{the PB function for multiplicative identity}
        $\minsert{\actives}{f}$\; \label{lineInsertIdA}
        \For(\tcp*[h]{non-empty set of nodes}){$\child \in \C{v}$}{
            $\childval \gets \valuator(\phi, \T, \wf, \child, \stack, \elims, \actives)$ \tcp{recursive call on a child $\child$ of $v$} \label{lineValChildA}
            $f' \gets f$\;
            $f \gets f' \cdot \childval$\;
            $\mremove{\actives}{\childval}$ \tcp{$\childval$ was added by initialization  \cref{lineInitActivesA} of \cref{dpAlgoA}) or a recursive call (\cref{lineInsertJoinA} or \ref{lineInsertProjA})}
            $\mremove{\actives}{f'}$ \tcp{$f'$ was added before this loop (\cref{lineInsertIdA}) or in the previous iteration (\cref{lineInsertJoinA})}
            $\minsert{\actives}{f}$\; \label{lineInsertJoinA}
        }
        \Assert{$\pb{\actives} = \exists_\elims \pars{\pb{\phi} \cdot \wf}$} \tcp{join-condition} \label{lineJoinCondA}
        \For(\tcp*[h]{possibly empty set of variables}){$x \in \pi(v)$}{
            $f' \gets f$\;
            $\gx \gets \dsgn_x f$\;
            $\push{\stack}{\gx}$ \tcp{$\stack$ will be used to construct a maximizer (\cref{dpAlgoA})}
            \Assert{$\ta \cup \gx\of{\restrict{\ta}{\dom{\gx}}}$ is a maximizer of $\exists_\elims \pars{\pb{\phi} \cdot \wf}$ if $\ta$ is a maximizer of $\exists_{\set{x} \cup \elims} \pars{\pb{\phi} \cdot \wf}$}\; \label{lineMaximizerPushA}
            $f \gets \exists_x \pars{f' \cdot \wf_x}$\;
            $\elims \gets \elims \cup \set{x}$\;
            $\mremove{\actives}{f', \wf_x}$ \tcp{$f'$ was added before this loop (\cref{lineInsertJoinA}) or in the previous iteration (\cref{lineInsertProjA}); $\wf_x$ was added by initialization  \cref{lineInitActivesA} of \cref{dpAlgoA})}
            $\minsert{\actives}{f}$\; \label{lineInsertProjA}
            \Assert{$\pb{\actives} = \exists_\elims \pars{\pb{\phi} \cdot \wf}$} \tcp{project-condition} \label{lineProjCondA}
        }
    }
    \Assert{$\pb{\actives} = \exists_\elims \pars{\pb{\phi} \cdot \wf}$} \tcp{post-condition} \label{linePostCondA}
    \Return $f$
\end{algorithm}

\begin{restatable}[Correctness of \cref{dpAlgoA}]{theorem}{reAnnThmDper}
\label{thmDpAlgoA}
    Let $\phi$ be an \xcnf{} formula and $\wf$ be a literal-weight function over $\vars{\phi}$.
    Then \cref{dpAlgoA} solves \bmpe{} on $\tup{\phi, \wf}$.
\end{restatable}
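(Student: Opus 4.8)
The plan is to show that \cref{dpAlgoA} returns $\tup{m, \ta}$ in which $m$ is the maximum and $\ta$ a maximizer of $\pb\phi\cdot\wf$, as \cref{defBmpe} requires. Everything hinges on the invariant $\pb\actives = \exists_\elims\pars{\pb\phi\cdot\wf}$ that annotates \cref{valuatorAlgoA} (its pre-, join-, project-, and post-conditions), on the fact that $\valuator$ returns the valuation of its node argument, and on \cref{propIterMax} for rebuilding $\ta$. At the top-level call on \cref{lineValRootA}, \cref{dpAlgoA} has just set $\elims = \emptyset$ (\cref{lineInitProjVarsA}) and $\actives = \set{\pb c \mid c \in \phi} \cup \set{\wf_x \mid x \in X}$ (\cref{lineInitActivesA}); using the \xcnf{} factorization $\pb\phi = \prod_{c\in\phi}\pb c$ together with $\wf = \prod_{x\in X}\wf_x$, this yields $\pb\actives = \pb\phi\cdot\wf = \exists_\emptyset\pars{\pb\phi\cdot\wf}$, the pre-condition \cref{linePrecondA}. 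Since the extra bookkeeping in \cref{valuatorAlgoA} never changes how $f$ is computed, \cref{lemmaValuator} already gives that the call returns $\val{v}$; it remains to prove, by structural induction on $\tup{T, r}$, that a call on $v$ preserves $\pb\actives = \exists_\elims\pars{\pb\phi\cdot\wf}$ while enlarging $\elims$ by exactly the union of the sets $\pi(w)$ over the internal nodes $w$ of the subtree rooted at $v$, and that it makes every assertion on \cref{lineMaximizerPushA} encountered during the call valid.

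For a leaf, $\actives$, $\elims$, and $\stack$ are untouched, so there is nothing to check. For an internal node $v$, each iteration of the join loop removes $\childval$ and $f'$ from $\actives$ and inserts $f' \cdot \childval$, so $\pb\actives$ is unchanged, and $\elims$ moves only through the recursive calls, which preserve the invariant by the inductive hypothesis; hence the join-condition \cref{lineJoinCondA} holds, with $f \in \actives$ equal to $\prod_{v'\in\C{v}}\val{v'}$. In the projection loop over $x\in\pi(v)$, the crucial observation is that $x$ then occurs in no member of $\actives$ besides $f'$ and $\wf_x$: the partition criterion of \cref{defPjt} rules out $x\in\pi(w)$ for $w\neq v$; its descendant criterion places every clause $c$ with $x\in\vars{c}$ in the subtree of some child of $v$, so --- distinct children having disjoint subtrees --- $\pb c$ has already been folded into $f$ by the recursive calls; and $\wf_x$ is the only literal-weight function mentioning $x$. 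Therefore \cref{propEarlyProj} applies, and removing $\set{f',\wf_x}$ from $\actives$, inserting $\exists_x\pars{f'\cdot\wf_x}$, and adding $x$ to $\elims$ preserves the invariant, giving the project-condition \cref{lineProjCondA} and, after the loop, the post-condition \cref{linePostCondA}. The assertion on \cref{lineMaximizerPushA} then follows from \cref{propIterMax} applied to the PB function $\exists_\elims\pars{\pb\phi\cdot\wf}$ and the variable $x$, using that $\exists_{\set{x}\cup\elims}\pars{\pb\phi\cdot\wf} = \exists_x\pars{\exists_\elims\pars{\pb\phi\cdot\wf}}$ and that the remaining active functions do not involve $x$.

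Assembling the outputs is then routine. Applied to the root $r$: after \cref{lineValRootA} we have $\elims = X$ by the partition criterion, $\val{r}$ is a constant PB function, and $\pb\actives = \exists_X\pars{\pb\phi\cdot\wf}$; \cref{thmRootValuation} (whose proof uses only $\pb\phi = \prod_{c\in\phi}\pb c$ and so applies to \xcnf) gives $m = \val{r}\of{\emptyset} = \pars{\exists_X\pars{\pb\phi\cdot\wf}}\of{\emptyset}$, the maximum of $\pb\phi\cdot\wf$ by \cref{defMaximum}. For $\ta$, I would carry the while-loop invariant ``$\ta$ is a maximizer of $\exists_\elims\pars{\pb\phi\cdot\wf}$'': it holds at \cref{lineMaximizerConstA} since $\ta=\emptyset$, $\elims = X$, and $\exists_X\pars{\pb\phi\cdot\wf}$ is constant; each iteration restores it at \cref{lineMaximizerPopA} by the assertion attached to the popped derivative sign. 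The LIFO order of $\stack$ makes this match up: each push of a derivative sign for a variable $x$ was immediately followed by $\elims\gets\elims\cup\set{x}$, so the $\elims$ recorded in the corresponding assertion on \cref{lineMaximizerPushA} is the set of the previously pushed variables, which is exactly the value of $\elims$ when that sign is popped, the later-pushed variables having already been removed. When $\stack$ empties, all $\abs{X}$ signs have been consumed, $\elims = \emptyset$, and $\ta$ is a maximizer of $\exists_\emptyset\pars{\pb\phi\cdot\wf} = \pb\phi\cdot\wf$, as required.

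The step I expect to be the main obstacle is the structural induction for \cref{valuatorAlgoA}: threading the multiset $\actives$, the set $\elims$, and the stack $\stack$ correctly through the recursion, arguing rigorously that the variable being eliminated has by then been purged from every active function but $f'$ and $\wf_x$ (this is where both criteria of \cref{defPjt} and the disjointness of children's subtrees are indispensable, and is what licenses \cref{propEarlyProj}), and --- most delicate --- checking that the locally computed derivative sign $\dsgn_x f$ behaves like the one \cref{propIterMax} prescribes for the globally projected function $\exists_\elims\pars{\pb\phi\cdot\wf}$, so that the assertion on \cref{lineMaximizerPushA} is sound.
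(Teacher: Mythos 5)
Your decomposition tracks the paper's almost exactly: the pre-condition at the root from the initialization of $\elims$ and $\actives$, an induction propagating the invariant $\pb{\actives} = \exists_\elims\pars{\pb{\phi}\cdot\wf}$ through the recursion (your structural induction on the tree plays the role of the paper's induction on the sequence of recursive calls in \cref{lemmaPrePostCondsA}), the use of both criteria of \cref{defPjt} to justify early projection as in \cref{lemmaProjCondFirstVarA}, and the LIFO-matched while-loop invariant for reconstructing the maximizer as in \cref{lemmaMaximizerConstA,lemmaMaximizerPopFirstDsgnA,lemmaMaximizerPopNextDsgnA}. That part is fine.

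The genuine gap is at the step you yourself flag as the main obstacle, and the justification you offer does not close it. To conclude that the locally computed $\dsgn_x f$ certifies a maximizer of the globally projected function $\exists_\elims\pars{\pb{\phi}\cdot\wf} = f\cdot\pb{\actives\setminus\set{f}}$, it is not enough to cite \cref{propIterMax} plus the claim that the cofactor does not mention $x$: \cref{propIterMax} prescribes the derivative sign of the \emph{whole} function, so you additionally need the transfer fact that $\dsgn_x f = \dsgn_x\pars{f\cdot g}$ whenever $x\notin\vars{g}$, and this holds only because the cofactor has \emph{non-negative} range (multiplying the defining inequality $f\of{\extend{\ta}{x}{1}}\ge f\of{\extend{\ta}{x}{0}}$ by $g\of{\cdot}$ preserves its direction only when $g\ge 0$). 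This is precisely the content of the paper's \cref{lemmaDsgnEarlyProjA,lemmaMaximizerDsgnA}, which your plan never proves and whose non-negativity hypothesis you never mention. Worse, your premise ``the remaining active functions do not involve $x$'' is not literally true at the moment of the push: $\wf_x$ is still a member of $\actives$ and does involve $x$ (it is removed only after $f\gets\exists_x\pars{f'\cdot\wf_x}$), so the cofactor of $f$ in $\pb{\actives}$ is $\wf_x$ times an $x$-free function, not an $x$-free function. You must confront that factor explicitly rather than assert it away; as written, your argument states the conclusion exactly where the real work (and the real fragility) lies.
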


We will first prove some lemmas regarding \cref{valuatorAlgoA}.

\subsubsection{Proofs for \cref{valuatorAlgoA}}

The following proofs involve an \xcnf{} formula $\phi$, a project-join tree $\T = \tup{T, r, \gamma, \pi}$ for $\phi$, and a literal-weight function $\wf$ over $X = \vars{\phi}$.

\begin{lemma}
\label{lemmaPreCondRootA}
    The pre-condition of the call $\valuator(\phi, \T, \wf, r, \stack, \elims, \actives)$ holds (\cref{linePrecondA} of \cref{valuatorAlgoA}).
\end{lemma}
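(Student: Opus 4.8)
The plan is to unwind the state of the mutable arguments $\elims$ and $\actives$ at the precise moment \cref{valuatorAlgoA} is invoked on the root in \cref{dpAlgoA}, and then evaluate both sides of the claimed pre-condition $\pb{\actives} = \exists_\elims \pars{\pb{\phi} \cdot \wf}$ directly.

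First I would observe that the call $\valuator(\phi, \T, \wf, r, \stack, \elims, \actives)$ on \cref{lineValRootA} is the very first statement that uses $\elims$ and $\actives$ after their initialization on \cref{lineInitProjVarsA,lineInitActivesA}; no intervening line of \cref{dpAlgoA} modifies either of them. Hence at that point $\elims = \emptyset$ and, as multisets, $\actives = \set{\pb{c} \mid c \in \phi} \cup \set{\wf_x \mid x \in X}$.

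Next I would compute the right-hand side: since $\exists_\emptyset g := g$ by convention, $\exists_\elims \pars{\pb{\phi} \cdot \wf} = \pb{\phi} \cdot \wf$. For the left-hand side, by the definition $\pb{\actives} := \prod_{a \in \actives} a$ together with commutativity and associativity of join, $\pb{\actives} = \prod_{c \in \phi} \pb{c} \cdot \prod_{x \in X} \wf_x$. Applying the \xcnf{} factorization $\pb{\phi} = \prod_{c \in \phi} \pb{c}$ and the definition of a literal-weight function, $\wf = \prod_{x \in X} \wf_x$, yields $\pb{\actives} = \pb{\phi} \cdot \wf$, which coincides with the right-hand side; this establishes the pre-condition.

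There is no genuine obstacle here: the statement is essentially a one-step bookkeeping check, and the only point requiring a modicum of care is the first step — confirming that control reaches \cref{lineValRootA} with $\elims$ and $\actives$ still holding exactly their initialized values — which is immediate from inspecting \cref{dpAlgoA}. The real content of the correctness argument is deferred to the invariant-preservation lemmas for the recursive calls inside \cref{valuatorAlgoA}, not to this base case.
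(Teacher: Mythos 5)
Your proof is correct and follows essentially the same route as the paper's: both reduce the claim to the initializations on \cref{lineInitProjVarsA,lineInitActivesA}, use $\exists_\emptyset$-convention for $\elims = \emptyset$, and apply the \xcnf{} factorization $\pb{\phi} = \prod_{c \in \phi} \pb{c}$ together with $\wf = \prod_{x \in X} \wf_x$ to identify $\pb{\actives}$ with $\pb{\phi} \cdot \wf$. The only cosmetic difference is that you evaluate the two sides separately and match them, whereas the paper writes a single chain of equalities from $\pb{\actives}$ to $\exists_\elims \pars{\pb{\phi} \cdot \wf}$.
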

\begin{proof}
    \cref{valuatorAlgoA} is called for the first time on \cref{lineValRootA} of \cref{dpAlgoA}.
    We have:
    \begin{align*}
        \pb{\actives}
        & = \prod_{a \in \actives} a \tag*{by definition}
        \\ & = \prod_{c \in \phi} \pb{c} \cdot \prod_{x \in X} \wf_x \tag*{as initialized on \cref{lineInitActivesA} of \cref{dpAlgoA}}
        \\ & = \prod_{c \in \phi} \pb{c} \cdot \wf \tag*{since $\wf$ is a literal-weight function}
        \\ & = \pb{\phi} \cdot \wf \tag*{because $\phi$ is an \xcnf{} formula}
        \\ & = \exists_\emptyset \pars{\pb{\phi} \cdot \wf} \tag*{by convention}
        \\ & = \exists_\elims \pars{\pb{\phi} \cdot \wf} \tag*{as initialized on \cref{lineInitProjVarsA} of \cref{dpAlgoA}}
    \end{align*}
\end{proof}

To simplify proofs, for each internal node $v$ of $\T$, assume that the sets $\C{v}$ and $\pi(v)$ have arbitrary (but fixed) orders.
Then we can refer to members of these two sets as the first, second, \ldots, and last.

\begin{lemma}
\label{lemmaPreCondFirstChildA}
    Let $v$ be an internal node of $\T$ and $\child$ be the first node in the set $\C{v}$.
    Note that $\valuator(\phi, \T, \wf, v, \stack, \elims, \actives)$ calls $\valuator(\phi, \T, \wf, \child, \stack, \elims, \actives)$.
    Assume that the pre-condition of the caller ($v$) holds.
    Then the pre-condition of the callee ($\child$) holds.
\end{lemma}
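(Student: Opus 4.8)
The plan is to unroll the execution of the caller $\valuator(\phi, \T, \wf, v, \stack, \elims, \actives)$ from the moment its pre-condition is asserted (\cref{linePrecondA} of \cref{valuatorAlgoA}) up to the moment it issues the recursive call $\valuator(\phi, \T, \wf, \child, \stack, \elims, \actives)$ on the first child $\child \in \C{v}$ (\cref{lineValChildA}), and to show that along this short stretch of code the equality $\pb{\actives} = \exists_\elims \pars{\pb{\phi} \cdot \wf}$ is preserved, so that it is available as the pre-condition of the callee.

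First I would observe that, because $v$ is internal, the pre-condition assertion is followed by entry into the \textbf{else} branch, and because $\child$ is the \emph{first} element of $\C{v}$, the body of the \textbf{for} loop has not yet executed when the call on $\child$ is made. Hence the only statements run between the two assertions are $f \gets \set{\tup{\emptyset, 1}}$ and $\actives \gets \actives \cup \set{f}$ (\cref{lineInsertIdA}). In particular $\phi$, $\wf$, and $\elims$ are untouched, so the right-hand side $\exists_\elims \pars{\pb{\phi} \cdot \wf}$ of the pre-condition is literally the same for caller and callee. It then remains to check that replacing $\actives$ by $\actives \cup \set{f}$ leaves $\pb{\actives} = \prod_{a \in \actives} a$ unchanged; this is immediate from \cref{defJoin} and the commutativity and associativity of join, since $f = \set{\tup{\emptyset, 1}}$ satisfies $\vars{f} = \emptyset$ and $f(\emptyset) = 1$, whence $g \cdot f = g$ for every PB function $g$. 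Combining this with the caller's pre-condition $\pb{\actives} = \exists_\elims \pars{\pb{\phi} \cdot \wf}$ yields the callee's pre-condition.

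The argument is genuinely routine, so there is no real obstacle; the only point requiring minimal care is the control-flow bookkeeping — confirming that exactly the two statements culminating in \cref{lineInsertIdA}, and nothing from the loop body, run before the first recursive call. This is precisely where the hypothesis that $\child$ is the first node in $\C{v}$ is used, and it is the feature distinguishing this lemma from the companion lemmas that handle subsequent children.
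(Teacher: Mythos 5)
Your proposal is correct and follows essentially the same route as the paper's proof: both observe that between the caller's pre-condition check and the first recursive call only the multiplicative identity is inserted into $\actives$ (leaving $\pb{\actives}$ unchanged) and $\elims$ is untouched. Your version merely spells out the control-flow bookkeeping and the identity property of $\set{\tup{\emptyset, 1}}$ in more detail than the paper does.
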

\begin{proof}
    Before \cref{lineValChildA} of \cref{valuatorAlgoA}, $\actives$ is modified trivially: inserting the multiplicative identity $f$ does not change $\pb{\actives}$.
    Also, $\elims$ is not modified.
\end{proof}

\begin{lemma}
\label{lemmaPostCondLeafA}
    Let $v$ be a leaf of $\T$.
    Assume that the pre-condition of the call $\valuator(\phi, \T, \wf, v, \stack, \elims, \actives)$ holds.
    Then the post-condition holds (\cref{linePostCondA}).
\end{lemma}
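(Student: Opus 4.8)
The plan is to observe that, in the leaf case, \cref{valuatorAlgoA} does essentially nothing to its mutable state between the pre-condition assertion on \cref{linePrecondA} and the post-condition assertion on \cref{linePostCondA}. First I would note that when $v \in \Lv$, control enters the \texttt{if} branch, whose only action is the assignment $f \gets \pb{\gamma(v)}$; the \texttt{else} branch --- which is the only place in \cref{valuatorAlgoA} where $\actives$, $\elims$, or $\stack$ is modified --- is skipped entirely. Consequently, the multiset $\actives$ and the set $\elims$ retain exactly the values they held when the pre-condition was checked.

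Next I would invoke the hypothesis: the pre-condition $\pb{\actives} = \exists_\elims \pars{\pb{\phi} \cdot \wf}$ holds on entry to the call. Since neither $\actives$ nor $\elims$ has changed along the leaf branch, the identical equation holds when control reaches \cref{linePostCondA}, which is precisely the post-condition. This completes the argument.

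There is no substantive obstacle here; the only point requiring care is confirming, by direct inspection of \cref{valuatorAlgoA}, that the leaf branch genuinely has no side effect on $\actives$, $\elims$, or $\stack$ --- in particular that producing the returned function $f = \pb{\gamma(v)}$ does not touch any of these structures. Given that, the lemma follows in one line.
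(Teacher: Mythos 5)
Your argument is correct and matches the paper's own proof, which likewise observes that neither $\actives$ nor $\elims$ is modified in the non-recursive (leaf) branch of \cref{valuatorAlgoA}, so the pre-condition equation persists unchanged to \cref{linePostCondA}. No gap here.
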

\begin{proof}
    Neither $\actives$ nor $\elims$ is modified in the non-recursive branch of \cref{valuatorAlgoA}, \ie, when $v \in \Lv$.
\end{proof}

\begin{lemma}
\label{lemmaPreCondNextChildA}
    Let $v$ be an internal node of $\T$ and $\child_1 < \child_2$ be consecutive nodes in $\C{v}$.
    Note that $\valuator(\phi, \T, \wf, v, \stack, \elims, \actives)$ calls $\valuator(\phi, \T, \wf, \child_1, \stack, \elims, \actives)$ then calls $\valuator(\phi, \T, \wf, \child_2, \stack, \elims, \actives)$.
    Assume that the post-condition of the first callee ($\child_1$) holds.
    Then the pre-condition of the second callee ($\child_2$) holds.
\end{lemma}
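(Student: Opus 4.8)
The plan is to track exactly which statements of \cref{valuatorAlgoA} are executed between the moment the recursive call $\valuator(\phi, \T, \wf, \child_1, \stack, \elims, \actives)$ returns and the moment the next recursive call $\valuator(\phi, \T, \wf, \child_2, \stack, \elims, \actives)$ is invoked, and to check that each of those statements preserves the invariant $\pb{\actives} = \exists_\elims \pars{\pb{\phi} \cdot \wf}$. Since $\child_1$ and $\child_2$ are consecutive in $\C{v}$, these are precisely the statements closing out the $\child_1$-iteration of the \textbf{for}-loop over $\C{v}$: the assignments $f' \gets f$ and $f \gets f' \cdot \childval$, followed by $\mremove{\actives}{\childval}$, $\mremove{\actives}{f'}$, and $\minsert{\actives}{f}$. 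In particular $\elims$ is not touched on this stretch, so it suffices to show that $\pb{\actives}$ is unchanged.

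First I would argue that both $\childval$ and $f'$ are members of the multiset $\actives$ at the start of this stretch. The function $f'$ is the value held in $f$ at the end of the previous iteration of the loop (or, for the first child, the multiplicative identity inserted on \cref{lineInsertIdA}), and in either case it was inserted into $\actives$ on \cref{lineInsertIdA}, \cref{lineInsertJoinA}, or \cref{lineInsertProjA} and not subsequently removed. The function $\childval$ is the value returned by the call on $\child_1$; here I would invoke the (easily checked, by inspecting the return statements of \cref{valuatorAlgoA}) side invariant that $\valuator$ always returns a PB function that is present in $\actives$ upon return --- for a leaf this is $\pb{\gamma(\child_1)}$, inserted on \cref{lineInitActivesA} of \cref{dpAlgoA}, and for an internal node it is the $f$ last inserted on \cref{lineInsertJoinA} or \cref{lineInsertProjA}.

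Given that $\childval, f' \in \actives$, the net effect of the three multiset operations is to delete one occurrence each of $\childval$ and $f'$ and to insert one occurrence of their multiplicative join $f = f' \cdot \childval$. Because the join operation is commutative and associative (as noted after \cref{defJoin}), the product $\pb{\actives}$ over the resulting multiset equals the product over the original one: replacing the two factors $f'$ and $\childval$ by the single factor $f' \cdot \childval$ leaves the total product invariant. Hence, using the assumed post-condition of the $\child_1$-call, $\pb{\actives} = \exists_\elims \pars{\pb{\phi} \cdot \wf}$ still holds; since $\elims$ is unchanged, this is exactly the pre-condition (\cref{linePrecondA}) of the call on $\child_2$.

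I expect the only genuine subtlety to be the bookkeeping that $\childval$ is indeed an element of $\actives$ when $\mremove{\actives}{\childval}$ is executed --- i.e., that $\actives$ faithfully contains every currently-live intermediate PB function, including the one just returned. This is why I would isolate the ``returned function lies in $\actives$'' invariant as a small auxiliary claim proved by inspection of \cref{valuatorAlgoA}; once it is in hand, the product-preservation argument is routine.
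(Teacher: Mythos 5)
Your proof is correct and takes essentially the same route as the paper's: the paper's argument is exactly that replacing $\childval$ and $f'$ in $\actives$ with $f = f' \cdot \childval$ leaves $\pb{\actives}$ unchanged while $\elims$ is untouched, so the post-condition of the call on $\child_1$ transfers directly to the pre-condition of the call on $\child_2$. Your additional auxiliary claim that the returned function actually lies in $\actives$ is a reasonable piece of bookkeeping that the paper leaves implicit, but it does not change the substance of the argument.
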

\begin{proof}
    After the first callee returns and before the second callee starts, $\actives$ is trivially modified: replacing $\childval$ and $f'$ with $f = f' \cdot \childval$ does not change $\pb{\actives}$.
    Also, $\elims$ is not modified.
\end{proof}

\begin{lemma}
\label{lemmaJoinCondA}
    Let $v$ be an internal node of $\T$.
    For each $\child \in \C{v}$, note that $\valuator(\phi, \T, \wf, v, \stack, \elims, \actives)$ calls $\valuator(\phi, \T, \wf, \child, \stack, \elims, \actives)$.
    Assume that the post-condition of the last callee holds.
    Then the join-condition of the caller holds (\cref{lineJoinCondA}).
\end{lemma}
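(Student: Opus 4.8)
The plan is to treat this as a bookkeeping lemma: I must carry the invariant $\pb{\actives} = \exists_\elims\pars{\pb{\phi} \cdot \wf}$ from the instant the recursive call on the last child of $v$ returns to the instant control reaches \cref{lineJoinCondA} of \cref{valuatorAlgoA}. Between these two instants the only code that runs is the tail of the final iteration of the \textbf{for} loop over $\C{v}$: the assignments $f' \gets f$ and $f \gets f' \cdot \childval$, the removals of $\childval$ and of $f'$ from $\actives$, and the insertion of $f$ into $\actives$ on \cref{lineInsertJoinA}. Nothing else executes in this window, so the whole argument is about what these operations do to $\pb{\actives}$ and to $\elims$.

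First I would check that the two removals are legal. The value $\childval$ returned by the last callee is the element last placed into $\actives$ on behalf of that callee --- added either by the initialization on \cref{lineInitActivesA} of \cref{dpAlgoA} (when the child is a leaf, so the callee returns $\pb{\gamma(\child)}$) or by an insertion on \cref{lineInsertJoinA} or \ref{lineInsertProjA} inside the callee --- so $\childval \in \actives$. Likewise $f'$ is the current value of $f$, inserted either on \cref{lineInsertIdA} before the loop (if this is the first child) or on \cref{lineInsertJoinA} in the previous iteration, so $f' \in \actives$ as well. These are exactly the facts already recorded in the inline comments of \cref{valuatorAlgoA}, so this step is immediate.

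Next comes the algebraic step, which is routine: since $\pb{\actives} = \prod_{a \in \actives} a$ and multiplicative join is commutative and associative (\cref{defJoin}), deleting the two multiset elements $\childval$ and $f'$ and then inserting the single element $f = f' \cdot \childval$ leaves the product $\prod_{a \in \actives} a$ unchanged. Hence $\pb{\actives}$ has the same value at \cref{lineJoinCondA} as it did immediately after the last callee returned. Since $\elims$ is untouched in this window, $\exists_\elims\pars{\pb{\phi} \cdot \wf}$ is unchanged as well. Combining these observations with the hypothesized post-condition of the last callee, namely $\pb{\actives} = \exists_\elims\pars{\pb{\phi} \cdot \wf}$, gives the join-condition of the caller, as desired.

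I do not expect a genuine obstacle. The only delicate point is the multiset accounting that licenses the two removal steps, and that is already spelled out in the annotations of \cref{valuatorAlgoA}; everything else is a one-line consequence of the product form of $\pb{\cdot}$ and the associativity of join. A trivial boundary case, $\C{v} = \emptyset$ (which forces $T$ to be the one-node tree and $\phi = \emptyset$), makes the hypothesis about ``the last callee'' vacuous, so it needs no separate treatment.
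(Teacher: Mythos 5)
Your proposal is correct and matches the paper's own argument: the paper likewise observes that after the last callee returns, $\actives$ is only modified by replacing $\childval$ and $f'$ with $f = f' \cdot \childval$, which leaves $\pb{\actives}$ unchanged, and that $\elims$ is untouched. Your extra bookkeeping about why the removals are legal and the vacuous empty-children case is harmless elaboration of the same route.
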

\begin{proof}
    After the last callee returns, $\actives$ is trivially modified: replacing $\childval$ and $f'$ with $f = f' \cdot \childval$ does not change $\pb{\actives}$.
    Also, $\elims$ is not modified.
\end{proof}

\begin{lemma}
\label{lemmaProjCondFirstVarA}
    Let $v$ be an internal node of $\T$ and $x$ be the first variable in $\pi(v)$.
    Assume that the join-condition of \cref{valuatorAlgoA} holds.
    Then the project-condition holds (\cref{lineProjCondA}).
\end{lemma}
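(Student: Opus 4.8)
The plan is to trace how the multiset $\actives$ and the set $\elims$ are modified during the first pass through the loop over $x \in \pi(v)$ in \cref{valuatorAlgoA}, and then to move the new projection $\exists_x$ outside of $\pb{\actives}$ via \cref{propEarlyProj}. First I would note that, within this single loop iteration, only $\actives$ and $\elims$ are relevant to the project-condition on \cref{lineProjCondA}: pushing $\gx = \dsgn_x f$ onto $\stack$ and the maximizer assertion on \cref{lineMaximizerPushA} leave both unchanged. So the net effect on $\actives$ is to delete one copy each of $f'$ and $\wf_x$ and to insert one copy of $f_{\mathrm{new}} := \exists_x\pars{f' \cdot \wf_x}$, while $\elims$ becomes $\elims \cup \set{x}$. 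Here $f'$ equals the value of $f$ at \cref{lineJoinCondA}, which was inserted at \cref{lineInsertJoinA} (the children loop runs at least once since $v$ is internal), and $\wf_x$ is still active: because $x \in \pi(v)$ and the sets $\pi(w)$ partition $\vars{\phi}$ (criterion~1 of \cref{defPjt}), $x$ has not been projected before, so $\wf_x$ was never removed. Writing $\actives'$ for $\actives$ at \cref{lineJoinCondA} with these two copies deleted, the goal becomes $\pb{\actives'} \cdot \exists_x\pars{f' \cdot \wf_x} = \exists_{\set{x} \cup \elims}\pars{\pb{\phi} \cdot \wf}$ (where $\elims$ denotes its value at \cref{lineJoinCondA}).

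The crux is the claim that $x \notin \vars{\pb{\actives'}} = \bigcup_{a \in \actives'} \vars{a}$, which I would prove by classifying the members of $\actives'$. At \cref{lineJoinCondA}, the active functions other than $f'$ are of three kinds: original clause functions $\pb{c}$ whose leaf $\gamma^{-1}(c)$ is \emph{not} a descendant of $v$ (each descendant leaf had its clause function removed together with its recursive valuation $\childval$); partial joins $\prod_{\child'} \val{\child'}$ contributed by children $\child'$ of proper ancestors of $v$ whose subtrees do not contain $v$; and weight functions $\wf_y$ with $y \neq x$ that have not yet been projected. For $\wf_y$ we have $\vars{\wf_y} = \set{y} \not\ni x$. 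For the first two kinds, every variable involved comes from some clause $c$ with $\gamma^{-1}(c)$ not a descendant of $v$; but criterion~2 of \cref{defPjt} says that every clause $c$ with $x \in \vars{c}$ has $\gamma^{-1}(c)$ a descendant of $v$, so contrapositively $x$ occurs in none of those clauses, hence in none of those PB functions.

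Finally I would apply \cref{propEarlyProj} with $f := f' \cdot \wf_x$, $g := \pb{\actives'}$, and $S := \set{x}$, which is legitimate since $\set{x} \subseteq \vars{f' \cdot \wf_x} \setminus \vars{\pb{\actives'}}$ by the claim; this yields $\exists_x\pars{f' \cdot \wf_x} \cdot \pb{\actives'} = \exists_x\pars{f' \cdot \wf_x \cdot \pb{\actives'}} = \exists_x\pars{\pb{\actives}}$, where $\actives$ is its value at \cref{lineJoinCondA}. The join-condition gives $\pb{\actives} = \exists_\elims\pars{\pb{\phi} \cdot \wf}$, so by commutativity of existential projection (and $x \notin \elims$, since $\elims$ contains only previously-projected variables, which are disjoint from $\pi(v) \ni x$ by criterion~1) the right-hand side equals $\exists_{\set{x} \cup \elims}\pars{\pb{\phi} \cdot \wf}$, which is exactly the project-condition once $\elims$ has been updated to $\elims \cup \set{x}$. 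I expect the main obstacle to be the bookkeeping behind the key claim: precisely identifying which PB functions remain in $\actives$ at \cref{lineJoinCondA} (essentially an invariant about the DFS order in which \cref{valuatorAlgoA} visits $\T$) and then using the descendant criterion of the project-join tree to exclude $x$ from each of them; once that is pinned down, the conclusion is a one-line use of early projection.
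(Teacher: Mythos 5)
Your proof is correct and follows essentially the same route as the paper's: rewrite $\pb{\actives}$ after the update as $\pars{\exists_x \pars{f' \cdot \wf_x}} \cdot \pb{\actives \setminus \set{f}}$, pull the projection outward via early projection, and then invoke the join-condition together with commutativity of existential projection. The only substantive difference is that you work out in detail the key fact $x \notin \vars{\actives \setminus \set{f}}$ (by classifying the remaining active functions and applying criterion~2 of \cref{defPjt} plus the form of $\wf$), which the paper asserts in a single parenthetical remark.
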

\begin{proof}
    We have:
    \begin{align*}
        \pb{\actives}
        & = \pb{\set{f} \cup \actives \setminus \set{f}}
        \\ & = f \cdot \pb{\actives \setminus \set{f}}
        \\ & = \pars{\exists_x \pars{f' \cdot \wf_x}} \cdot \pb{\actives \setminus \set{f}}
        \\ & = \exists_x \pars{f' \cdot \wf_x \cdot \pb{\actives \setminus \set{f}}} \tag*{since $x \notin \vars{\actives \setminus \set{f}}$, as $\T$ is a project-join tree and $\wf$ is a literal-weight function}
        \\ & = \exists_x \pb{\set{f', \wf_x} \cup \actives \setminus \set{f}}
        \\ & = \exists_x \exists_{\elims \setminus \set{x}} \pars{\pb{\phi} \cdot \wf} \tag*{because the join-condition was assumed to hold}
        \\ & = \exists_\elims \pars{\pb{\phi} \cdot \wf}
    \end{align*}
\end{proof}

\begin{lemma}
\label{lemmaProjCondNextVarA}
    Let $v$ be an internal node of $\T$ and $x_1 < x_2$ be consecutive variables in $\pi(v)$.
    Assume that the project-condition of \cref{valuatorAlgoA} holds in the iteration for $x_1$.
    Then the project-condition also holds in the iteration for $x_2$.
\end{lemma}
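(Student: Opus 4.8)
The plan is to treat this lemma as the inductive step that, together with \cref{lemmaProjCondFirstVarA} (the base case for the first variable in $\pi(v)$), propagates the project-condition across every variable eliminated at the node $v$. First I would observe that the \textbf{for} loop over $\pi(v)$ in \cref{valuatorAlgoA} modifies $\actives$ and $\elims$ only inside an iteration: moving from the iteration for $x_1$ to the iteration for $x_2$ involves no assignment to either structure before the line $f' \gets f$. Hence the state of $\actives$ and $\elims$ at the start of the $x_2$-iteration is exactly their state at the moment the project-condition for $x_1$ was asserted on \cref{lineProjCondA}, and I may take that project-condition, $\pb{\actives} = \exists_\elims\pars{\pb{\phi} \cdot \wf}$ with $x_1 \in \elims$ and $x_2 \notin \elims$, as the standing hypothesis when entering the $x_2$-iteration.

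Next I would trace the body of the $x_2$-iteration up to \cref{lineProjCondA}: $f'$ is set to the current $f$, the new $f$ becomes $\exists_{x_2}\pars{f' \cdot \wf_{x_2}}$, the functions $f'$ and $\wf_{x_2}$ leave $\actives$ while the new $f$ enters, and $\elims$ gains $x_2$. The computation of $\pb{\actives}$ is then word-for-word the one in the proof of \cref{lemmaProjCondFirstVarA}: factor out the current $f$ from the product defining $\pb{\actives}$, pull the projection $\exists_{x_2}$ outside the product over the remaining active functions using \cref{propEarlyProj}, recombine $\set{f', \wf_{x_2}}$ with the rest of $\actives$ to recover the active multiset from just before this iteration, and invoke the $x_1$-project-condition to rewrite that multiset's product as $\exists_{\elims \setminus \set{x_2}}\pars{\pb{\phi} \cdot \wf}$, so that the whole expression collapses to $\exists_{x_2}\exists_{\elims \setminus \set{x_2}}\pars{\pb{\phi} \cdot \wf} = \exists_\elims\pars{\pb{\phi} \cdot \wf}$, which is precisely the project-condition for $x_2$.

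The only nonroutine point — and the step I expect to be the main obstacle — is discharging the side condition $x_2 \notin \vars{\actives \setminus \set{f}}$ required to apply \cref{propEarlyProj}. I anticipate this goes through exactly as in \cref{lemmaProjCondFirstVarA}, by three observations: $\wf_{x_2}$ has just been removed from $\actives$, and since $\wf$ is a literal-weight function, $x_2$ occurs in no other factor $\wf_x$; every clause $c \in \phi$ with $x_2 \in \vars{c}$ has its leaf $\gamma^{-1}(c)$ a descendant of $v$ by the second criterion of \cref{defPjt}, so $\pb{c}$ has already been joined into $f$ and is gone from $\actives$; and because $\set{\pi(w) : w \in \V{T} \setminus \Lv}$ is a partition of $\vars{\phi}$, the variable $x_2$ is projected nowhere except at $v$, so it cannot have entered the accumulated valuation of any ancestor node still present in $\actives$. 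With the side condition in place, the chain of equalities closes and the lemma follows.
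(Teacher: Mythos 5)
Your proposal is correct and takes essentially the same route as the paper, whose entire proof of this lemma is the single line ``Similar to \cref{lemmaProjCondFirstVarA}.'' You faithfully expand that remark: you repeat the chain of equalities from \cref{lemmaProjCondFirstVarA} with the $x_1$-project-condition playing the role of the join-condition, and you additionally spell out why $x_2 \notin \vars{\actives \setminus \set{f}}$, a side condition the paper discharges only with a brief parenthetical.
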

\begin{proof}
    Similar to \cref{lemmaProjCondFirstVarA}.
\end{proof}

\begin{lemma}
\label{lemmaProjCondA}
    Let $v$ be an internal node of $\T$ and $x$ be a variable in $\pi(v)$.
    Assume that the join-condition of \cref{valuatorAlgoA} holds.
    Then the project-condition holds.
\end{lemma}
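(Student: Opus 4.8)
The plan is to prove \cref{lemmaProjCondA} by induction on the position of $x$ in the fixed order chosen on $\pi(v)$, simply chaining the two immediately preceding lemmas. The base case is already handled: if $x$ is the first variable in $\pi(v)$, then \cref{lemmaProjCondFirstVarA}, under the hypothesis that the join-condition holds, gives the project-condition in the loop iteration for $x$. So all that remains is the inductive step.

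For the inductive step, I would take $x$ to be the $i$-th variable in $\pi(v)$ with $i > 1$ and let $x'$ be the $(i-1)$-st variable, so that $x'$ and $x$ are consecutive in $\pi(v)$. The induction hypothesis applied to $x'$ — with the same standing assumption that the join-condition of \cref{valuatorAlgoA} holds — yields that the project-condition holds in the iteration for $x'$. Then \cref{lemmaProjCondNextVarA}, instantiated with $x_1 = x'$ and $x_2 = x$, gives that the project-condition also holds in the iteration for $x$. Since every $x \in \pi(v)$ occupies some position $i \ge 1$, this establishes the claim for an arbitrary $x \in \pi(v)$.

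I expect essentially no analytic difficulty here, since the substantive computation — relating $\exists_x$ of a join to the updated active multiset, using that $\T$ is a project-join tree and $\wf$ is literal-weighted — was already carried out in \cref{lemmaProjCondFirstVarA}, and its iteration-to-iteration propagation in \cref{lemmaProjCondNextVarA}. The only point to watch is the bookkeeping in the inductive step: I must confirm that the hypothesis of \cref{lemmaProjCondNextVarA} (the project-condition holding after the iteration for $x'$) is precisely what the induction hypothesis delivers, and that the assumption ``the join-condition holds'' is shared across all appeals to the hypothesis — it concerns the state of $\actives$ and $\elims$ just after the child-loop within the same call to \cref{valuatorAlgoA}, so it need not be re-established for each variable. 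With that observed, the induction closes routinely.
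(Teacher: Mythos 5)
Your proposal is correct and matches the paper's own proof, which simply says to apply \cref{lemmaProjCondFirstVarA} and \cref{lemmaProjCondNextVarA}; you have just spelled out the implicit induction on the position of $x$ in $\pi(v)$ that this chaining amounts to. No gaps.
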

\begin{proof}
    Apply \cref{lemmaProjCondFirstVarA,lemmaProjCondNextVarA}.
\end{proof}

\begin{lemma}
\label{lemmaPostCondInternalA}
    Let $v$ be an internal node of $\T$.
    For each $\child \in \C{v}$, note that $\valuator(\phi, \T, \wf, v, \stack, \elims, \actives)$ calls $\valuator(\phi, \T, \wf, \child, \stack, \elims, \actives)$.
    Assume that the post-condition of the last callee holds.
    Then the post-condition of the caller holds.
\end{lemma}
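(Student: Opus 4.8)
The plan is to follow the control flow of the internal-node branch of \cref{valuatorAlgoA} and chain together the invariant lemmas already proved for it. First I would invoke \cref{lemmaJoinCondA}: since the post-condition of the last recursive callee is assumed to hold, the join-condition $\pb{\actives} = \exists_\elims \pars{\pb{\phi} \cdot \wf}$ holds at \cref{lineJoinCondA}, immediately before the loop over $\pi(v)$.

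Next I would split on whether $\pi(v)$ is empty. If $\pi(v) = \emptyset$, the variable loop never executes, so nothing between \cref{lineJoinCondA} and \cref{linePostCondA} modifies $\actives$ or $\elims$; hence the post-condition is literally the join-condition and we are done. If $\pi(v) \neq \emptyset$, I would apply \cref{lemmaProjCondA} (which is itself obtained from \cref{lemmaProjCondFirstVarA} and \cref{lemmaProjCondNextVarA} by induction along the fixed order of $\pi(v)$, starting from the join-condition): the project-condition at \cref{lineProjCondA} then holds in every iteration, in particular in the iteration for the last variable of $\pi(v)$. After that final iteration the loop terminates, and again nothing between \cref{lineProjCondA} and \cref{linePostCondA} touches $\actives$ or $\elims$, so the post-condition follows.

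Since this is pure invariant bookkeeping, I do not anticipate a substantive obstacle. The only points needing care are making sure the induction driving \cref{lemmaProjCondA} is applied so as to cover the \emph{last} variable of $\pi(v)$, and handling the empty-$\pi(v)$ case separately, so that in each case the post-condition is correctly identified with either the join-condition or the last project-condition.
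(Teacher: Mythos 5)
Your proposal is correct and follows essentially the same route as the paper's proof: invoke \cref{lemmaJoinCondA} to obtain the join-condition, then \cref{lemmaProjCondA} to obtain the project-condition for the last variable of $\pi(v)$, from which the post-condition follows. Your explicit treatment of the case $\pi(v) = \emptyset$ is a small refinement the paper leaves implicit, but it does not change the argument.
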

\begin{proof}
    By \cref{lemmaJoinCondA}, the join-condition holds.
    Then by \cref{lemmaProjCondA}, the project-condition holds in the iteration for the last variable in $\pi(v)$.
    So the post-condition holds.
\end{proof}

\begin{lemma}
\label{lemmaPrePostCondsA}
    The pre-condition and post-condition of \cref{valuatorAlgoA} hold for each input $v \in \V{T}$.
\end{lemma}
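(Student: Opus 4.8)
The plan is to prove, by structural induction on the project-join tree $\T = \tup{T, r, \gamma, \pi}$ --- equivalently, by induction on the number of nodes in the subtree rooted at the argument node --- the following \emph{strengthened} claim: for every $v \in \V{T}$, if the pre-condition on \cref{linePrecondA} of \cref{valuatorAlgoA} holds when $\valuator$ is invoked on $v$, then (a) the post-condition on \cref{linePostCondA} holds when that invocation returns, and (b) both the pre-condition and the post-condition hold at the corresponding invocation and return of $\valuator$ on every node of the subtree rooted at $v$ (including $v$ itself). Granting this claim, \cref{lemmaPrePostCondsA} follows at once: the pre-condition of the root call holds by \cref{lemmaPreCondRootA}, so instantiating the claim at $v = r$ and using that $\V{T}$ is exactly the set of nodes of the subtree rooted at $r$ gives the statement via part (b). For the base case $v \in \Lv$ the pre-condition holds by assumption, \cref{lemmaPostCondLeafA} supplies the post-condition, and $v$ has no proper descendants, so (a) and (b) hold.

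For the inductive step, let $v$ be an internal node whose children, listed in the order in which the \textbf{for} loop of \cref{valuatorAlgoA} processes them, are $\child_1, \child_2, \ldots, \child_k$; each subtree rooted at a $\child_i$ is strictly smaller, so the induction hypothesis applies to it. From the pre-condition of $v$, \cref{lemmaPreCondFirstChildA} yields the pre-condition of the call on $\child_1$; the induction hypothesis applied to $\child_1$ then yields its post-condition and establishes (a)--(b) for the subtree rooted at $\child_1$. Now run a secondary finite induction on $i$: given the post-condition of the call on $\child_i$, \cref{lemmaPreCondNextChildA} gives the pre-condition of the call on $\child_{i+1}$, and the induction hypothesis applied to $\child_{i+1}$ supplies its post-condition together with (a)--(b) for the subtree rooted at $\child_{i+1}$. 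After all $k$ children have been handled, (a)--(b) hold for the union of their subtrees, and in particular the post-condition of the last callee $\child_k$ holds; \cref{lemmaPostCondInternalA} then gives the post-condition of $v$. Since the node set of the subtree rooted at $v$ is $\set{v}$ together with the node sets of the subtrees rooted at $\child_1, \ldots, \child_k$, parts (a) and (b) hold for $v$, completing the induction. (The lemmas \cref{lemmaJoinCondA}, \cref{lemmaProjCondFirstVarA}, \cref{lemmaProjCondNextVarA}, and \cref{lemmaProjCondA} are already absorbed into the proof of \cref{lemmaPostCondInternalA} and need not be invoked directly here.)

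I expect the main subtlety to be the sequential threading of the mutable state --- the stack $\stack$, the eliminated-variable set $\elims$, and the active multiset $\actives$ --- through the children of an internal node: because each child's invocation starts from the state left by its left sibling, a bare structural induction does not close, and one must both strengthen the hypothesis to part (b) and run the secondary induction on the child index, so that the post-condition proved for $\child_i$ can be handed, through \cref{lemmaPreCondNextChildA}, to $\child_{i+1}$ as its pre-condition. Care is also needed to read the pre- and post-conditions as assertions about the program state at the specific points \cref{linePrecondA} and \cref{linePostCondA}, so that "holds for $v$" refers to the unique invocation of $\valuator$ on $v$ performed during the run of \cref{dpAlgoA}.
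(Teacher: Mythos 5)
Your proof is correct, and it discharges every obligation using exactly the same five supporting lemmas as the paper (\cref{lemmaPreCondRootA,lemmaPreCondFirstChildA,lemmaPreCondNextChildA,lemmaPostCondLeafA,lemmaPostCondInternalA}); the difference lies entirely in how the induction is organized. The paper inducts on the temporal sequence of recursive calls: its base case is the leftmost root-to-leaf chain of calls, and its step case assumes the pre-condition for every call already started and the post-condition for every call already returned. Your version is a structural induction on subtrees, strengthened so that the conclusion covers every node of the subtree rooted at the argument, with a secondary finite induction on the child index to thread the mutable state $\stack$, $\elims$, $\actives$ from each sibling to the next. The structural formulation buys a cleaner treatment of one point the paper's induction hypothesis does not literally cover: applying \cref{lemmaPostCondInternalA} at an internal node $v$ requires the post-condition of $v$'s \emph{last} child, whose call starts after $v$'s call starts (so it is not among the ``calls returned before the current call starts''); your strengthened conclusion (a) for each child supplies exactly that fact. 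The paper's formulation, in exchange, tracks the dynamic execution trace more directly and avoids the explicit strengthening. Your diagnosis that the real subtlety is the sequential threading of state through siblings---which is why a bare structural induction would not close---is exactly right.
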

\begin{proof}
    Employ induction on the recursive calls to the algorithm.

    In the \textbf{base case}, consider the leftmost nodes of $\T$: $v_1, v_2, \ldots, v_k$, where $v_{i + 1}$ is the first node in $\C{v_i}$, $v_1$ is the root $r$, and $v_k$ is a leaf.
    We show that the pre-condition of the call $\valuator(\phi, \T, \wf, v_i, \stack, \elims, \actives)$ holds for each $i = 1, 2, \ldots, k$ and that the post-condition of the call $\valuator(\phi, \T, \wf, v_k, \stack, \elims, \actives)$ holds.
    The pre-condition holds for each $v_i$ due to \cref{lemmaPreCondRootA,lemmaPreCondFirstChildA}.
    The post-condition holds for $v_k$ by \cref{lemmaPostCondLeafA}.

    In the \textbf{step case}, the induction hypothesis is that the pre-condition holds for each call started before the current call starts and that the post-condition holds for each call returned before the current call starts.
    We show that the pre-condition and post-condition hold for the current call $\valuator(\phi, \T, \wf, v, \stack, \elims, \actives)$.
    The pre-condition for $v$ holds due to \cref{lemmaPreCondFirstChildA,lemmaPreCondNextChildA}.
    The post-condition for $v$ holds by \cref{lemmaPostCondLeafA,lemmaPostCondInternalA}.
\end{proof}

\begin{lemma}
\label{lemmaJoinProjCondsA}
    The join-condition and project-condition of \cref{valuatorAlgoA} hold for each internal node $v$ of $\T$ and variable $x$ in $\pi(v)$.
\end{lemma}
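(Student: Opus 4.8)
The plan is to obtain \cref{lemmaJoinProjCondsA} by assembling the lemmas already established for \cref{valuatorAlgoA}, without revisiting the algorithm's state directly. Fix an internal node $v$ of $\T$. Since $v$ is internal, its child set $\C{v}$ is non-empty (this is precisely what the loop annotation ``non-empty set of nodes'' in \cref{valuatorAlgoA} records), so with respect to the fixed ordering on $\C{v}$ chosen just before \cref{lemmaPreCondFirstChildA} there is a well-defined last child $\child$. By \cref{lemmaPrePostCondsA}, the post-condition of \cref{valuatorAlgoA} holds for \emph{every} input node in $\V{T}$, in particular for $\child$ --- that is, the post-condition of the last callee of the call on $v$ holds. \cref{lemmaJoinCondA} then yields that the join-condition of the caller holds at \cref{lineJoinCondA}. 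This disposes of the join-condition half of the statement.

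For the project-condition, I would feed the join-condition just obtained into \cref{lemmaProjCondA}: for every variable $x \in \pi(v)$, the hypothesis of that lemma (that the join-condition of \cref{valuatorAlgoA} holds) is now satisfied, so the project-condition holds in the iteration for $x$. Since $x$ was an arbitrary member of $\pi(v)$, this covers all of $\pi(v)$, which is what the statement asks. If $\pi(v) = \emptyset$ there is nothing to check for the project-condition, and the join-condition argument above is unaffected.

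I expect the main --- and essentially the only --- difficulty to be organizational rather than mathematical: making sure the hypotheses of the cited lemmas are literally in force. In particular, \cref{lemmaJoinCondA} is phrased in terms of ``the last callee'', so one must invoke the global ordering convention on $\C{v}$ and the non-emptiness of $\C{v}$ so that a last callee exists; and one should note that \cref{lemmaPrePostCondsA} was itself proved by induction over all recursive calls, so it already accounts for the children of $v$ and no fresh induction is needed here. The lone genuinely degenerate case is a single-node project-join tree, whose root is an internal node with no children (forcing $\phi = \emptyset$ and $\pi(r) = \emptyset$); since no join or project step is ever reached there, the lemma is vacuous for it, and it can be dismissed in a sentence if full exhaustiveness is desired.
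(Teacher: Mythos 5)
Your proposal is correct and follows essentially the same route as the paper's proof: invoke \cref{lemmaPrePostCondsA} to obtain the post-condition of the last recursive call on \cref{lineValChildA}, then apply \cref{lemmaJoinCondA} for the join-condition and \cref{lemmaProjCondA} for the project-condition. The extra bookkeeping you add (non-emptiness of $\C{v}$, the ordering convention, the vacuous cases) is sound but not something the paper spells out.
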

\begin{proof}
    By \cref{lemmaPrePostCondsA}, the post-condition of the last call on \cref{lineValChildA} holds.
    Then by \cref{lemmaJoinCondA}, the join-condition holds.
    And by \cref{lemmaProjCondA}, the project-condition holds.
\end{proof}

\newcommand{\other}{g} 

\begin{lemma}
\label{lemmaDsgnEarlyProjA}
    Let $f : \ps{X} \to \R$ and $\other : \ps{Y} \to \R$ be PB functions with non-negative ranges.
    Assume that $x$ is a variable in $X \setminus Y$ and that $\ta$ is a truth assignment for $Y \cup X \setminus \set{x}$.
    Then ${\pars{\dsgn_x f}\of{\restrict{\ta}{X \setminus \set{x}}}} = \pars{\dsgn_x (f \cdot \other)}(\ta)$.
\end{lemma}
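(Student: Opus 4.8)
The plan is to unfold both sides with the definition of derivative sign (\cref{defDsgn}) and collapse the join into a single common scalar factor. Write $\ta' := \restrict{\ta}{X \setminus \set{x}}$; this restriction is well defined because $\dom{\ta} = Y \cup X \setminus \set{x} \supseteq X \setminus \set{x}$. Since $x \in X$ and $x \notin Y$, one checks directly that $\restrict{\pars{\extend{\ta}{x}{b}}}{X} = \extend{\ta'}{x}{b}$ and $\restrict{\pars{\extend{\ta}{x}{b}}}{Y} = \restrict{\ta}{Y}$ for each $b \in \B$. Setting $c := \other\of{\restrict{\ta}{Y}}$, \cref{defJoin} (join) then yields
\[
    \pars{f \cdot \other}\of{\extend{\ta}{x}{b}} = f\of{\extend{\ta'}{x}{b}} \cdot c \qquad \text{for } b \in \B,
\]
and the hypothesis that $\other$ has non-negative range gives $c \ge 0$.

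With this identity in hand, I would compare the two defining inequalities of \cref{defDsgn}. By definition, $\pars{\dsgn_x \pars{f \cdot \other}}(\ta) = \set{\va{x}{1}}$ holds precisely when $\pars{f \cdot \other}\of{\extend{\ta}{x}{1}} \ge \pars{f \cdot \other}\of{\extend{\ta}{x}{0}}$, i.e.\ when $c \cdot f\of{\extend{\ta'}{x}{1}} \ge c \cdot f\of{\extend{\ta'}{x}{0}}$; likewise $\pars{\dsgn_x f}(\ta') = \set{\va{x}{1}}$ holds precisely when $f\of{\extend{\ta'}{x}{1}} \ge f\of{\extend{\ta'}{x}{0}}$. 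Multiplying by the non-negative scalar $c$ preserves the relation $\ge$, so the second inequality implies the first; and when $c > 0$ one may divide through by $c$, so the two inequalities are equivalent. In that case the two $\dsgn$ evaluations select the same branch of \cref{defDsgn} and are therefore equal (either both $\set{\va{x}{1}}$ or both $\set{\va{x}{0}}$), which is the claim.

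I expect the one delicate point to be the degenerate case $c = \other\of{\restrict{\ta}{Y}} = 0$: there both products vanish, so the tie-breaking convention of \cref{defDsgn} forces $\pars{\dsgn_x \pars{f \cdot \other}}(\ta) = \set{\va{x}{1}}$ irrespective of $f$, while $\pars{\dsgn_x f}(\ta')$ need not be $\set{\va{x}{1}}$. To close the argument cleanly one wants $\other$ to have a strictly positive range -- as do the literal-weight functions that play the role of $\other$ in the applications of this lemma -- so that division by the common factor is always legitimate and the $c = 0$ case does not arise; alternatively, one restricts attention to truth assignments $\ta$ at which $\other\of{\restrict{\ta}{Y}} > 0$. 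Modulo that, the proof reduces to the two short computations above plus the observation that a strictly positive scalar factor neither destroys nor creates ties in the comparison underlying \cref{defDsgn}.
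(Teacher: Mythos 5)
Your argument follows the same route as the paper's proof: unfold \cref{defDsgn}, use $x \in X$ and $x \notin Y$ to commute restriction with extension, and compare the two defining inequalities after factoring out the common scalar $c = \other\of{\restrict{\ta}{Y}} \ge 0$. The one point where you go beyond the paper is the ``delicate point'' you flag, and you are right to flag it: it is a genuine gap, and it is present in the paper's own proof as well. The paper argues only the branch $\pars{\dsgn_x f}\of{\restrict{\ta}{X \setminus \set{x}}} = \set{\va{x}{1}}$ in detail --- there multiplication by $c \ge 0$ preserves $\ge$, so that direction is sound --- and dismisses the branch $\set{\va{x}{0}}$ as ``similar.'' It is not: that branch starts from the strict inequality $f\of{\extend{\restrict{\ta}{X \setminus \set{x}}}{x}{1}} < f\of{\extend{\restrict{\ta}{X \setminus \set{x}}}{x}{0}}$, which multiplication by $c = 0$ collapses to an equality, and the tie-breaking convention of \cref{defDsgn} then yields $\pars{\dsgn_x (f \cdot \other)}(\ta) = \set{\va{x}{1}} \ne \set{\va{x}{0}}$. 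So the lemma as stated fails when $\other\of{\restrict{\ta}{Y}} = 0$, a case that can genuinely occur in the intended application, since $\other$ there is a product that includes Boolean clause functions, which vanish on falsifying assignments. Your proposed repairs (require a strictly positive range for $\other$, or restrict to assignments at which $\other$ is positive) both close the gap; a third option, which disturbs the rest of the development least, is to weaken the conclusion to what \cref{lemmaMaximizerDsgnA} actually needs --- namely that $\ta \cup \pars{\dsgn_x f}\of{\restrict{\ta}{X \setminus \set{x}}}$ is a maximizer of $f \cdot \other$ whenever $\ta$ is a maximizer of $\exists_x (f \cdot \other)$ --- because in the degenerate case $c = 0$ both extensions of $\ta$ give $f \cdot \other$ the same value, so either branch choice yields a maximizer.
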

\begin{proof}
    First, assume the case $\pars{\dsgn_x f}\of{\restrict{\ta}{X \setminus \set{x}}} = \set{\va{x}{1}}$.
    We have the following inequalities:
    \begin{align*}
        f\of{\extend{\restrict{\ta}{X \setminus \set{x}}}{x}{1}} & \ge f\of{\extend{\restrict{\ta}{X \setminus \set{x}}}{x}{0}} \tag*{by \cref{defDsgn} (derivative sign)}
        \\ f\of{\restrict{\extend{\ta}{x}{1}}{X}} & \ge f\of{\restrict{\extend{\ta}{x}{0}}{X}} \tag*{as $x \in X$}
        \\ f\of{\restrict{\extend{\ta}{x}{1}}{X}} \cdot \other\of{\restrict{\ta}{Y}} & \ge f\of{\restrict{\extend{\ta}{x}{0}}{X}} \cdot \other\of{\restrict{\ta}{Y}} \tag*{because the range of $\other$ is non-negative}
        \\ f\of{\restrict{\extend{\ta}{x}{1}}{X}} \cdot \other\of{\restrict{\extend{\ta}{x}{1}}{Y}} & \ge f\of{\restrict{\extend{\ta}{x}{0}}{X}} \cdot \other\of{\restrict{\extend{\ta}{x}{0}}{Y}} \tag*{since $x \notin Y$}
        \\ (f \cdot \other)\of{\extend{\ta}{x}{1}} & \ge (f \cdot \other)\of{\extend{\ta}{x}{0}} \tag*{by \cref{defJoin} (join)}
    \end{align*}
    So $\pars{\dsgn_x (f \cdot \other)}(\ta) = \set{\va{x}{1}}$ by definition.
    Thus ${\pars{\dsgn_x f}\of{\restrict{\ta}{X \setminus \set{x}}}} = \pars{\dsgn_x (f \cdot \other)}(\ta)$.
    The case $\pars{\dsgn_x f}\of{\restrict{\ta}{X \setminus \set{x}}} = \set{\va{x}{0}}$ is similar.
\end{proof}

\begin{lemma}
\label{lemmaMaximizerDsgnA}
    Let $f : \ps{X} \to \R$ and $\other : \ps{Y} \to \R$ be PB functions with non-negative ranges.
    Assume that $x$ is a variable in $X \setminus Y$ and that $\ta$ is a maximizer of $\exists_x (f \cdot \other)$.
    Then $\ta' = \ta \cup \pars{\dsgn_x f}\of{\restrict{\ta}{X \setminus \set{x}}}$ is a maximizer of $f \cdot \other$.
\end{lemma}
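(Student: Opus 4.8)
The plan is to derive this lemma by chaining \cref{propIterMax} (iterative maximization) with \cref{lemmaDsgnEarlyProjA}. Compared with \cref{propIterMax}, the point of the present statement is that it lets us recover the bit for $x$ using the \emph{smaller} function $f$ rather than the full product $f \cdot \other$; the non-negativity hypotheses are there precisely to make that substitution legal via \cref{lemmaDsgnEarlyProjA}.

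First I would do the domain bookkeeping. Since $x \notin Y$, the join $f \cdot \other$ has variable set $X \cup Y$, and $\exists_x (f \cdot \other) : \ps{(X \cup Y) \setminus \set{x}} \to \R$. As $\ta$ is a maximizer of this function, $\ta$ is a truth assignment for $(X \cup Y) \setminus \set{x} = Y \cup (X \setminus \set{x})$, which is exactly the hypothesis on $\ta$ required to invoke \cref{lemmaDsgnEarlyProjA} for the pair $\tup{f, \other}$; the remaining hypotheses of that lemma ($f$ and $\other$ have non-negative ranges, $x \in X \setminus Y$) are given. Hence
\[
    \pars{\dsgn_x f}\of{\restrict{\ta}{X \setminus \set{x}}} = \pars{\dsgn_x (f \cdot \other)}(\ta).
\]

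Next I would apply \cref{propIterMax} to the PB function $f \cdot \other : \ps{X \cup Y} \to \R$ and the variable $x$. Since $\ta$ is a maximizer of $\exists_x (f \cdot \other)$, that proposition gives that $\ta \cup \pars{\dsgn_x (f \cdot \other)}(\ta)$ is a maximizer of $f \cdot \other$. Substituting the identity from the previous step, $\ta \cup \pars{\dsgn_x f}\of{\restrict{\ta}{X \setminus \set{x}}}$, which is exactly $\ta'$, is a maximizer of $f \cdot \other$, as claimed.

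There is no genuinely hard step: the argument is a short composition of results already established in the excerpt. The only things to be careful about are the domain of $\ta$ and the orientation of the equality in \cref{lemmaDsgnEarlyProjA} (it is stated so that the $f$-side is $\pars{\dsgn_x f}\of{\restrict{\ta}{X \setminus \set{x}}}$, which is what plugs into the conclusion of \cref{propIterMax}). A mild point worth noting is that \cref{propIterMax} itself does not require non-negativity — only \cref{lemmaDsgnEarlyProjA} does — so the hypotheses of the present lemma are exactly what is consumed.
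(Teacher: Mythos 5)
Your proof is correct and follows exactly the paper's argument: apply \cref{lemmaDsgnEarlyProjA} to identify $\pars{\dsgn_x f}\of{\restrict{\ta}{X \setminus \set{x}}}$ with $\pars{\dsgn_x (f \cdot \other)}(\ta)$, then invoke \cref{propIterMax}. The extra domain bookkeeping you spell out is a welcome (and accurate) elaboration of what the paper leaves implicit, but the route is the same.
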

\begin{proof}
    By \cref{lemmaDsgnEarlyProjA}, $\ta' = \ta \cup \pars{\dsgn_x (f \cdot \other)}(\ta)$.
    Then by \cref{propIterMax} (iterative maximization), $\ta'$ is a maximizer of $f \cdot \other$.
\end{proof}

\begin{lemma}
\label{lemmaMaximizerPushA}
    Let $v$ be an internal node of $\T$ and $x$ be a variable in $\pi(v)$.
    Then the assertion on \cref{lineMaximizerPushA} of \cref{valuatorAlgoA} holds.
\end{lemma}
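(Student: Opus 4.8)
The plan is to expand the loop invariants that \cref{lemmaJoinProjCondsA} and its supporting lemmas already establish into an explicit factorization of $\exists_\elims\pars{\pb{\phi}\cdot\wf}$ at the instant \cref{lineMaximizerPushA} of \cref{valuatorAlgoA} is reached, isolate the unique factor that still mentions $x$, and then feed that factorization into \cref{lemmaMaximizerDsgnA} to lift the recorded derivative sign $\gx$ to a maximizer of the full function.

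First I would pin down the state at \cref{lineMaximizerPushA} in the iteration for a fixed $x\in\pi(v)$. By \cref{lemmaJoinProjCondsA} the project-join invariant $\pb{\actives}=\exists_\elims\pars{\pb{\phi}\cdot\wf}$ holds at \cref{lineJoinCondA}, and, via \cref{lemmaProjCondNextVarA}, at the end of every earlier iteration of the projection loop; between any of those points and \cref{lineMaximizerPushA} only $f'\gets f$, the assignment to $\gx$, and the push onto $\stack$ occur, none of which touch $\actives$ or $\elims$, so the invariant still holds there. Both $f$ (equal to the just-saved $f'$) and $\wf_x$ are still members of $\actives$ at that point, since $\wf_x$ is removed only when $x$ is projected — which happens only after \cref{lineMaximizerPushA} — and $f$ is removed only at \cref{lineInsertProjA}. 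Hence, writing $h:=\pb{\actives\setminus\set{f,\wf_x}}$, we get $\exists_\elims\pars{\pb{\phi}\cdot\wf}=\pars{f\cdot\wf_x}\cdot h$, and, since $x\notin\elims$ and existential projection is commutative, $\exists_{\set{x}\cup\elims}\pars{\pb{\phi}\cdot\wf}=\exists_x\pars{\pars{f\cdot\wf_x}\cdot h}$.

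The key step is to check that $x\notin\vars{h}$, and this is exactly where \cref{defPjt} enters. Criterion 2 forces every clause $c$ with $x\in\vars{c}$ to have its leaf $\gamma^{-1}(c)$ below $v$, so $\pb{c}$ has already been folded into the accumulated factor $f$ on \cref{lineValChildA} rather than surviving in $\actives\setminus\set{f,\wf_x}$; criterion 1 (the sets $\pi(w)$ partition $\vars{\phi}$) forces $x$ to be eliminated only at $v$, so no factor produced by an earlier projection mentions $x$; and the remaining members of $\actives\setminus\set{f,\wf_x}$ are literal-weight pieces $\wf_y$ with $y\ne x$, which are $x$-free. Thus $h$ is $x$-free while $f\cdot\wf_x$ contains $x$ in its $\wf_x$ part, and — since clause functions are $\set{0,1}$-valued and the literal weights are non-negative, so every valuation built from them has non-negative range — both $f\cdot\wf_x$ and $h$ have non-negative range. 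I would then apply \cref{lemmaMaximizerDsgnA} with first argument $f\cdot\wf_x$, second argument $h$, and variable $x$: from the hypothesis that $\ta$ is a maximizer of $\exists_x\pars{\pars{f\cdot\wf_x}\cdot h}=\exists_{\set{x}\cup\elims}\pars{\pb{\phi}\cdot\wf}$, it yields that $\ta\cup\pars{\dsgn_x\pars{f\cdot\wf_x}}\of{\restrict{\ta}{\vars{f\cdot\wf_x}\setminus\set{x}}}$ is a maximizer of $\pars{f\cdot\wf_x}\cdot h=\exists_\elims\pars{\pb{\phi}\cdot\wf}$. Because $\gx=\dsgn_x\pars{f\cdot\wf_x}$ and $\dom{\gx}=\vars{f\cdot\wf_x}\setminus\set{x}$, this is verbatim the asserted conclusion.

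I expect the only non-routine part to be the bookkeeping around $\wf_x$: \cref{lemmaMaximizerDsgnA} requires its second argument to be $x$-free, so $\wf_x$ cannot be peeled off into the remainder $h$, and the recorded derivative sign must therefore be seen as belonging to the composite local factor $f\cdot\wf_x$ (not to $f$ alone); \cref{lemmaDsgnEarlyProjA}, invoked inside \cref{lemmaMaximizerDsgnA}, is precisely what lets the rest of $\actives$ be absorbed without disturbing that derivative sign, and \cref{propIterMax} is what turns it into a maximizer. Everything else is the invariant from \cref{lemmaJoinProjCondsA} together with commutativity of $\exists$.
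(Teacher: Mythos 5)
Your overall route is the paper's: use the invariant established by \cref{lemmaJoinProjCondsA} to write $\exists_{\set{x} \cup \elims} \pars{\pb{\phi} \cdot \wf}$ as $\exists_x$ of a product of one $x$-carrying local factor with an $x$-free remainder, then invoke \cref{lemmaMaximizerDsgnA}. The one substantive difference is where $\wf_x$ goes. The paper factors $\pb{\actives} = f \cdot \pb{\actives \setminus \set{f}}$ and applies \cref{lemmaMaximizerDsgnA} with first argument $f$; since $\wf_x$ is still a member of $\actives$ at \cref{lineMaximizerPushA}, the paper's second argument is \emph{not} $x$-free, so the hypothesis $x \in X \setminus Y$ of \cref{lemmaMaximizerDsgnA} is violated. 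You instead group $\wf_x$ with the local factor, splitting into $f \cdot \wf_x$ versus $\pb{\actives \setminus \set{f, \wf_x}}$, which is the only split satisfying that hypothesis; your verification that the remainder is $x$-free via the two criteria of \cref{defPjt} is correct and more careful than the paper's.

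However, your closing identification ``$\gx = \dsgn_x \pars{f \cdot \wf_x}$'' is where the argument breaks against the algorithm as printed: \cref{valuatorAlgoA} sets $\gx \gets \dsgn_x f$ \emph{before} $\wf_x$ is multiplied in, so what you have actually proved is the assertion for a variant that records $\dsgn_x \pars{f' \cdot \wf_x}$. This is not a cosmetic mismatch. Take $\phi = \set{x_1 \vee x_2}$ with $\wf_{x_1}\of{\va{x_1}{0}} = 100$, $\wf_{x_1}\of{\va{x_1}{1}} = 1$, and $\wf_{x_2}$ constantly $1$: then $\dsgn_{x_1} \pb{x_1 \vee x_2}$ always selects $x_1 = 1$ (ties go to $1$), yet the unique maximizer of $\pb{\phi} \cdot \wf$ has $x_1 = 0$, so the printed assertion fails in the iteration for $x_1$. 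In short, your proof is sound for the corrected derivative sign $\dsgn_x \pars{f \cdot \wf_x}$, and the bookkeeping issue you flagged in your last paragraph is precisely the defect in the paper's own proof (its application of \cref{lemmaMaximizerDsgnA} with an $x$-dependent second argument); but neither your argument nor the paper's establishes the assertion for $\gx = \dsgn_x f$ as the algorithm is written.
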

\begin{proof}
    On that line:
    \begin{align*}
        \exists_{\set{x} \cup \elims} \pars{\pb{\phi} \cdot \wf}
        & = \exists_x \exists_\elims \pars{\pb{\phi} \cdot \wf}
        \\ & = \exists_x \pb{\actives} \tag*{as the join-condition and project-condition hold by \cref{lemmaJoinProjCondsA}}
        \\ & = \exists_x \pb{\set{f} \cup \actives \setminus \set{f}}
        \\ & = \exists_x \pars{f \cdot \pb{\actives \setminus \set{f}}}
    \end{align*}
    Apply \cref{lemmaMaximizerDsgnA}.
\end{proof}

\subsubsection{Proofs for \cref{dpAlgoA}}

\begin{lemma}
\label{lemmaMaximizerConstA}
    The assertion on \cref{lineMaximizerConstA} of \cref{dpAlgoA} holds.
\end{lemma}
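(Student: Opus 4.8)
The plan is to unwind what the assertion on \cref{lineMaximizerConstA} of \cref{dpAlgoA} actually says at the moment it is reached. Immediately above that line $\ta$ is set to $\emptyset$, and the call to $\valuator$ on \cref{lineValRootA} has (as the inline comment claims) left $\elims = X$. So the assertion amounts to the statement that the empty truth assignment $\emptyset$ is a maximizer of $\exists_X \pars{\pb{\phi} \cdot \wf}$. I would prove this in two steps: (i) justify $\elims = X$ at that point, and (ii) observe that a constant PB function is trivially maximized by the empty assignment.

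For step (i), note that $\elims$ is initialized to $\emptyset$ on \cref{lineInitProjVarsA} and is mutated only inside $\valuator$ (\cref{valuatorAlgoA}), where the sole update is $\elims \gets \elims \cup \set{x}$, executed once for each variable $x \in \pi(v)$ at each internal node $v$ that the recursion reaches. Since $\valuator$ is first invoked on the root $r$ and then recurses on $\C{v}$ at every internal node, it visits every node of the rooted tree $\tup{T, r}$; hence by the time the top-level call returns, $\elims = \bigcup_{v \in \V{T} \setminus \Lv} \pi(v)$. By the first criterion of \cref{defPjt}, this family of sets is a partition of $\vars{\phi} = X$, so $\elims = X$, as claimed. (Equivalently, one could invoke the post-condition from \cref{lemmaPrePostCondsA} for the root call together with the fact that $\val{r}$ is a constant PB function, but the direct traversal argument is shorter.)

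For step (ii), I would apply \cref{defMaximizer}. The function $\exists_X \pars{\pb{\phi} \cdot \wf}$ has variable set $\vars{\pb{\phi} \cdot \wf} \setminus X = X \setminus X = \emptyset$, so it is a constant PB function whose domain $\ps{\emptyset}$ is the singleton $\set{\emptyset}$. Its only truth assignment is $\emptyset$, and by the convention $\exists_\emptyset h := h$ we have $\pars{\exists_X \pars{\pb{\phi} \cdot \wf}}(\emptyset) = \pars{\exists_\emptyset \exists_X \pars{\pb{\phi} \cdot \wf}}(\emptyset)$; hence $\emptyset$ is a maximizer of $\exists_X \pars{\pb{\phi} \cdot \wf}$, which (since existential projection is commutative, so $\exists_\elims = \exists_X$ when $\elims = X$) is exactly the assertion. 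I do not anticipate any real difficulty: the only point needing care is the bookkeeping claim $\elims = X$, and that is immediate once one observes that the recursion of \cref{valuatorAlgoA} sweeps the entire project-join tree while the $\pi(v)$ partition $X$.
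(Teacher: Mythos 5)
Your proof is correct and follows essentially the same route as the paper's: establish that $\elims = X$ after the top-level call to $\valuator$ returns, then observe that $\exists_X \pars{\pb{\phi} \cdot \wf}$ is a constant PB function whose unique (hence maximizing) truth assignment is $\emptyset$. The paper simply asserts the bookkeeping claim $\elims = X$, whereas you justify it via the tree traversal and the partition criterion of \cref{defPjt}; that added detail is sound and fills in a step the paper leaves implicit.
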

\begin{proof}
    \cref{lineValRootA} changed $\elims$ to $X$ after all calls to $\valuator$ returned.
    Then $\exists_\elims \pars{\pb{\phi} \cdot \wf}$ is a constant PB function.
    So $\ta = \emptyset$ is the maximizer of $\exists_\elims \pars{\pb{\phi} \cdot \wf}$.
\end{proof}

\begin{lemma}
\label{lemmaMaximizerPopFirstDsgnA}
    Let $\gx$ be the first derivative sign to be popped in \cref{dpAlgoA}.
    Then the assertion on \cref{lineMaximizerPopA} holds at this time.
\end{lemma}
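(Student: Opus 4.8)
The plan is to track the single derivative sign at issue through \cref{dpAlgoA,valuatorAlgoA} and then appeal to the two assertions already established, namely the one on \cref{lineMaximizerConstA} (proved as \cref{lemmaMaximizerConstA}) and the one on \cref{lineMaximizerPushA} (proved as \cref{lemmaMaximizerPushA}). First I would note that $\stack$ is a stack, so the first derivative sign $\gx$ popped in \cref{dpAlgoA} is the last one pushed during the recursive \valuator{} calls. Since pushes occur only inside the projection loop of \cref{valuatorAlgoA}, this $\gx$ has the form $\dsgn_x f$ for the last variable $x$ that any projection loop processes. Immediately after that push, \cref{valuatorAlgoA} executes $\elims \gets \elims \cup \set{x}$; and because every iteration of a projection loop performs a push, "no further pushes" is the same as "no further projection-loop iterations", so this is the last modification of $\elims$ made by any \valuator{} call. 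By \cref{lemmaMaximizerConstA}, $\elims = X$ once all those calls return; hence at the instant $\gx$ was pushed, $\elims$ had value $X \setminus \set{x}$.

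Next I would record the two states that the proof must line up. On the \valuator{} side, the assertion on \cref{lineMaximizerPushA}, which \cref{lemmaMaximizerPushA} proves, is --- at the instant $\gx$ was pushed, with $\elims = X \setminus \set{x}$ --- the implication: for every truth assignment $\ta$, if $\ta$ is a maximizer of $\exists_{\set{x} \cup (X \setminus \set{x})} \pars{\pb{\phi} \cdot \wf}$, \ie, of $\exists_X \pars{\pb{\phi} \cdot \wf}$, then $\ta \cup \gx\of{\restrict{\ta}{\dom{\gx}}}$ is a maximizer of $\exists_{X \setminus \set{x}} \pars{\pb{\phi} \cdot \wf}$. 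On the \cref{dpAlgoA} side, just before the first pop we have $\ta = \emptyset$ and $\elims = X$, with $\emptyset$ a maximizer of $\exists_X \pars{\pb{\phi} \cdot \wf}$ by \cref{lemmaMaximizerConstA}; popping $\gx$ and running the loop body then sets $\ta \gets \emptyset \cup \gx\of{\restrict{\emptyset}{\dom{\gx}}}$ and $\elims \gets X \setminus \set{x}$.

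Finally I would instantiate the implication above at $\ta := \emptyset$: its hypothesis holds by \cref{lemmaMaximizerConstA}, so its conclusion gives that $\emptyset \cup \gx\of{\restrict{\emptyset}{\dom{\gx}}}$ is a maximizer of $\exists_{X \setminus \set{x}} \pars{\pb{\phi} \cdot \wf}$. This is exactly the statement that the post-pop value of $\ta$ is a maximizer of $\exists_\elims \pars{\pb{\phi} \cdot \wf}$ for the post-pop value of $\elims$, which is the assertion on \cref{lineMaximizerPopA}. The step I expect to be the main obstacle is the bookkeeping that matches the push-time value of $\elims$ to its pop-time value: one must confirm that the unique variable the first pop removes from $\elims$ is precisely the variable added to $\elims$ by the matching (final) push, and that the universally quantified $\ta$ in the assertion on \cref{lineMaximizerPushA} may legitimately be specialized to $\emptyset$. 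The rest is just unwinding the two loop bodies.
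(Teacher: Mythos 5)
Your proposal is correct and follows the same route as the paper's proof: combine \cref{lemmaMaximizerConstA} (so $\emptyset$ maximizes $\exists_{\set{x} \cup \elims}\pars{\pb{\phi}\cdot\wf}$) with the push-time assertion guaranteed by \cref{lemmaMaximizerPushA}. The only difference is that you spell out the LIFO bookkeeping showing the push-time value of $\elims$ coincides with its pop-time value ($X \setminus \set{x}$), a step the paper leaves implicit.
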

\begin{proof}
    Note that $\gx = \dsgn_x f$ for some variable $x \in X$ and PB function $f$.
    Consider the following on \cref{lineMaximizerPopA}.
    By \cref{lemmaMaximizerConstA}, the truth assignment $\ta'$ is a maximizer of $\exists_{\set{x} \cup \elims} \pars{\pb{\phi} \cdot \wf}$.
    Recall that when $\gx$ was pushed onto $\stack$, the assertion on \cref{lineMaximizerPushA} of \cref{valuatorAlgoA} holds (by \cref{lemmaMaximizerPushA}).
    Then $\ta$ is a maximizer of $\exists_\elims \pars{\pb{\phi} \cdot \wf}$.
\end{proof}

\begin{lemma}
\label{lemmaMaximizerPopNextDsgnA}
    Let \cref{dpAlgoA} successively pop derivative signs $\gx_1$ then $\gx_2$.
    Assume that the assertion on \cref{lineMaximizerPopA} holds in the iteration for $\gx_1$.
    Then the assertion also holds in the iteration for $\gx_2$.
\end{lemma}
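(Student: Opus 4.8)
The plan is to prove \cref{lemmaMaximizerPopNextDsgnA} by closely imitating the proof of \cref{lemmaMaximizerPopFirstDsgnA}, using the hypothesis of the present lemma in place of the appeal to \cref{lemmaMaximizerConstA}. Write $\gx_2 = \dsgn_{x_2} f_2$ for some variable $x_2 \in X$ and some PB function $f_2$. The first step is to observe that, in the popping iteration for $\gx_2$, the truth assignment $\ta'$ (the one set by the line $\ta' \gets \ta$) is exactly the value of $\ta$ at the end of the iteration for $\gx_1$. By the hypothesis, the assertion on \cref{lineMaximizerPopA} held at that moment, so $\ta'$ is a maximizer of $\exists_\elims \pars{\pb{\phi} \cdot \wf}$, where $\elims$ has the value it had just after the $\gx_1$-iteration; this is also its value at the start of the $\gx_2$-iteration.

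The second step is to unwind the $\gx_2$-iteration. After the update $\elims \gets \elims \setminus \set{x_2}$, write $\elims$ for the resulting set. Since $\gx_2$ was popped it had earlier been pushed, and in \cref{valuatorAlgoA} every push of $\dsgn_x f$ is immediately followed by $\elims \gets \elims \cup \set{x}$; hence $x_2$ belonged to $\elims$ before the update and $x_2 \notin \elims$ afterward, so $\ta'$ is a maximizer of $\exists_{\set{x_2} \cup \elims} \pars{\pb{\phi} \cdot \wf}$. Now proceed exactly as in \cref{lemmaMaximizerPopFirstDsgnA}: when $\gx_2$ was pushed onto $\stack$, the assertion on \cref{lineMaximizerPushA} held (by \cref{lemmaMaximizerPushA}), and the value of $\elims$ at that push coincides with the post-update value above; therefore $\ta = \ta' \cup \gx_2\of{\restrict{\ta'}{\dom{\gx_2}}}$ is a maximizer of $\exists_\elims \pars{\pb{\phi} \cdot \wf}$, which is precisely the assertion on \cref{lineMaximizerPopA} in the iteration for $\gx_2$.

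The delicate point — the one I expect to be the main obstacle to a fully rigorous write-up — is the bookkeeping claim that the value of $\elims$ at the instant $\gx_2$ was pushed agrees with the value of $\elims$ obtained after the update $\elims \gets \elims \setminus \set{x_2}$ during $\gx_2$'s pop. This is a consequence of the LIFO discipline of $\stack$ together with the fact that in \cref{valuatorAlgoA} each push of $\dsgn_x f$ is followed immediately by $\elims \gets \elims \cup \set{x}$ while in \cref{dpAlgoA} each pop is followed immediately by $\elims \gets \elims \setminus \set{x}$; one can make it precise by a side induction maintaining the invariant that the sequence of variables on $\stack$ from top to bottom is exactly the reverse of the order in which the elements of $\elims$ were added. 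Since the analogous synchronization is used tacitly in \cref{lemmaMaximizerPopFirstDsgnA}, the intended write-up will treat it the same way, keeping the proof to the two steps above. Combining \cref{lemmaMaximizerPopFirstDsgnA} with this lemma, by induction on the iterations of the while loop, then shows that the assertion on \cref{lineMaximizerPopA} holds throughout the loop, which is what is needed for \cref{thmDpAlgoA}.
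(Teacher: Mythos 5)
Your proposal is correct and follows essentially the same route as the paper's proof: use the hypothesis to conclude that $\ta'$ is a maximizer of $\exists_{\set{x_2} \cup \elims} \pars{\pb{\phi} \cdot \wf}$, then invoke the push-time assertion (\cref{lemmaMaximizerPushA}) to conclude that $\ta$ is a maximizer of $\exists_\elims \pars{\pb{\phi} \cdot \wf}$. Your explicit flagging of the synchronization between the value of $\elims$ at push time and at pop time is a point the paper leaves tacit, but it does not change the argument.
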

\begin{proof}
    Note that $\gx_2 = \dsgn_{x_2} f_2$ for some variable $x_2 \in X$ and PB function $f_2$.
    Consider the following on \cref{lineMaximizerPopA} in the iteration for $\gx_2$.
    By the assumption, the truth assignment $\ta'$ is a maximizer of $\exists_{\set{x_2} \cup \elims} \pars{\pb{\phi} \cdot \wf}$.
    Recall that when $\gx_2$ was pushed onto $\stack$, the assertion on \cref{lineMaximizerPushA} of \cref{valuatorAlgoA} holds (by \cref{lemmaMaximizerPushA}).
    Then $\ta$ is a maximizer of $\exists_\elims \pars{\pb{\phi} \cdot \wf}$.
\end{proof}

We are now ready to prove \cref{thmDpAlgoA}.

\reAnnThmDper*
\begin{proof}
    The first output ($m$) is the maximum of $\pb{\phi} \cdot \wf$ as \cref{lineValRootA} computes the $\wf$-valuation of the root $r$ of a project-join tree for $\phi$ (see \cref{thmRootValuation} and \cref{lemmaValuator}).
    The second output ($\ta$) is a maximizer of $\pb{\phi} \cdot \wf$ by \cref{lemmaMaximizerPopFirstDsgnA,lemmaMaximizerPopNextDsgnA}.
\end{proof}


\bibliography{dpo.bib}


\end{document}